\theoremstyle{plain}
\newtheorem{lem}{Lemma}
\newtheorem{thm}{Theorem}
\newtheorem{prop}{Proposition}
\theoremstyle{definition} 
\newtheorem{defn}{Definition}
\newenvironment{ex}
  {\pushQED{\qed}\examplex}
  {\popQED\endexamplex}
\newcommand{\Z}{\mathbb{Z}}
\newcommand{\R}{\mathbb{R}}
\renewcommand{\P}{\mathbb{P}}
\newcommand{\E}{\mathbb{E}}
\renewcommand{\L}{\mathbb{L}}
\renewcommand{\O}{\mathbb{O}}
\DeclarePairedDelimiter{\ceil}{\lceil}{\rceil}
\title{Enabling Humanitarian Applications with \\Targeted Differential Privacy}
\author[1]{Nitin Kohli}
\author[1,2,*]{Joshua E. Blumenstock}
\affil[1]{Center for Effective Global Action, UC Berkeley}
\affil[2]{School of Information, UC Berkeley}
\affil[*]{Corresponding Author: jblumenstock@berkeley.edu}
\date{}
\begin{document}

\maketitle

\begin{abstract}
The proliferation of mobile phones in low- and middle-income countries has suddenly and dramatically increased the extent to which the world’s poorest and most vulnerable populations can be observed and tracked by governments and corporations. Millions of historically ``off the grid'' individuals are now passively generating digital data; these data, in turn, are being used to make life-altering decisions about those individuals --- including whether or not they receive government benefits, and whether they qualify for a consumer loan. 
This paper develops an approach to implementing algorithmic decisions based on personal data, while also providing formal privacy guarantees to data subjects. The approach adapts differential privacy to applications that require decisions about individuals, and gives decision makers granular control over the level of privacy guaranteed to data subjects. We show that stronger privacy guarantees typically come at some cost, and use data from two real-world applications --- an anti-poverty program in Togo and a consumer lending platform in Nigeria --- to illustrate those costs. Our empirical results quantify the tradeoff between privacy and predictive accuracy, and characterize how different privacy guarantees impact overall program effectiveness. More broadly, our results demonstrate a way for humanitarian programs to responsibly use personal data, and better equip program designers to make informed decisions about data privacy.
\end{abstract}


\section*{Introduction} 
\label{sec:intro}
Mobile phones are now close to ubiquitous in even the world's poorest nations. It is estimated that 81\% of women and 87\% of men in low- and middle-income countries (LMICs) own a mobile phone \cite{jeffrie2023mobile}, and roughly 150 million individuals used mobile internet  for the first time in 2023 \cite{shanahan2023state}. With continual advances and investments in digital public infrastructure \cite{eaves2019government, hong2023dpi} and network connectivity, these numbers are expected to rise \cite{gsma2024mobileeconomy}. Consequently, billions of historically ``off the grid'' individuals are now passively generating digital footprint data through the everyday use of their mobile devices. 

These data, in turn, are being used by governments and corporations to make life-altering decisions about individuals in a variety of humanitarian applications. For example, digital footprint data from mobile phone networks have been successfully used in social protection and anti-poverty initiatives to identify individuals with the greatest need for humanitarian support \cite{aiken2022machine}. Mobile phone metadata have likewise been used in consumer credit products across LMICs to determine loan eligibility for hundreds of millions of historically unbanked individuals \cite{robinson2023impact}. In each of these settings, people's digital footprint data are combined with machine learning algorithms to \textit{target} them, i.e., to determine whether they should be eligible for some benefit (e.g., cash transfers or loans). 

However, the metadata generated when people use their mobile phones is exceptionally sensitive. Prior work has shown that mobile phone metadata can reveal an individual's sexual orientation \cite{ovide2021, boorstein2021},  religious affiliation \cite{dube2022measuring}, political preferences \cite{thompson2019twelve}, and social network connections \cite{eagle2009inferring}. For this reason, the use and analysis of these data --- particularly from socioeconomically disadvantaged populations --- raises serious privacy concerns \cite{taylor2015name, mann2018left, taylor2020price, blumenstock2023big}. 

This paper develops and tests a novel approach to making algorithmic targeting decisions based on provably private data, which provides formal and robust privacy guarantees while simultaneously enabling accurate interventions in downstream applications.  We make three main contributions. Our first contribution is methodological and provides a framework for rigorously reasoning about privacy in targeting applications. We illustrate how a variant of differential privacy --- which we call \textit{targeted differential privacy} --- can be used to produce provably private datasets that still allow for accurate targeting. Our second contribution is technical and provides a novel algorithm to generate privatized datasets that satisfy targeted differential privacy. Finally, we provide robust evidence on the empirical tradeoff between privacy and predictive accuracy in two real-world settings: an anti-poverty program in Togo and a consumer lending platform in Nigeria. The analysis shows how higher levels of privacy can protect against two canonical threats (singling-out attacks and attribute inference attacks), but that such privacy protections can impact overall program effectiveness. 
Perhaps most notably, we find that large increases in privacy can often be obtained for relatively small sacrifices in targeting accuracy. Taken together, our results  illustrate how mathematical techniques can enable the responsible use of personal data in humanitarian programs, and better equip program designers to make informed decisions about the tradeoffs involved in using targeted differential privacy in practice.

\section*{A framework and algorithm for provably private targeting}
\label{sec:approach}
Our first contribution is a framework for rigorously reasoning about data privacy in targeting applications. Underlying this framework is an interaction between a data holder (such as a mobile network operator) and a downstream policymaker or program official who wishes to make use of the data (such as a humanitarian program manager or loan officer). The program official seeks to target a policy (or product, promotion, etc.) to individuals based on eligibility criteria that are difficult or costly to observe directly, such as individual levels of consumption, deprivation, credit-worthiness, or profitability. (This is often the case in humanitarian settings when, for instance, the government does not have accurate and up-to-date data on impacted populations \cite{lindert2020sourcebook}).  We consider the common case in which the policymaker wishes to use a decision rule to determine eligibility based on \textit{proxy} information, where the proxies that form the decision rule are learned from patterns observed in a subset of the population for whom eligibility data can be directly observed (i.e., a learning or training sample). In traditional anti-poverty programs, administrators use simple linear models to identify observable household characteristics (such as roofing material) that can proxy for more nuanced measures of welfare (such as food consumption) \cite{grosh1995proxy}; in more recent programs, sophisticated machine learning algorithms are applied to high-dimensional data from multiple sources to predict household wealth \citep{aiken2022machine}. Likewise, traditional lending decisions are based on formal financial histories; more recent ``alternative'' credit scores use data from mobile phones and social media \citep{bjorkegren2020behavior}.

We study the privacy implications of this interaction when the program official wishes to base decisions on proxy data $X$ that are owned by the data holder, where $X$ contains sensitive information. To protect the privacy of the data, the data holder provides the program official a privacy-preserving dataset $X_{priv}$ that is generated by some privacy-enhancing technology. The program official then uses $X_{priv}$ as the basis for a decision rule that determines program eligibility. We refer to this data sharing interaction and prediction process as the \textit{targeting setting}; it is depicted in Figure \ref{fig:design_concept}(A-F). 
 

For most targeting settings, existing privacy-enhancing technologies (PETs) perform poorly. To illustrate, Figure~\ref{fig:pets_accuracy} shows how two different programs --- an anti-poverty program in Togo (Figure \ref{fig:pets_accuracy}A) and a micro-lending platform in Nigeria   (Figure \ref{fig:pets_accuracy}B) --- are impacted by common PETs. We provide details on both programs below, but at a general level both programs used machine learning, in conjunction with personal data from mobile phones, to determine an individual's eligibility for a valuable benefit. In Togo, where the benefit was an emergency cash transfer, we show results for a program in which 29\% of 4.95 million total individuals were eligible. In Nigeria, where the benefit was a micro-loan, we show results for a program in which low-risk borrowers (i.e., those with a high probability of repaying the loan) are eligible. 
In Figure~\ref{fig:pets_accuracy}, the left-most blue bar indicates baseline performance before any privacy protections are provided, and the two middle red bars indicate performance using differential privacy \cite{dwork2006calibrating,dwork2006our} and $k$-anonymity \cite{sweeney2002k,lefevre2006mondrian}. In the anti-poverty initiative, predictive accuracy falls by $4.7\% - 9.7\%$; for a national anti-poverty program this would translate into $115K - 240K$ additional exclusion errors (i.e., true poor who were excluded from receiving benefits as a result of inaccurate targeting). In the Nigerian consumer lending application, the accuracy of the credit scoring model fell by $16.1\% - 16.4\%$; had the micro-lending platform utilized these PETs, the relative profit of the lending program would have been reduced by $430\% - 476\%$ (see \textit{Methods, Case Studies}).

This stark performance loss arises from a conceptual mismatch between existing PETs and the targeting setting. Accurate targeting relies on an algorithm's ability to correctly distinguish between individuals of different types, based on observed characteristics of those individuals. However, differential privacy is designed to enable statistical learning about \textit{groups} of individuals, while restricting ``excessive'' learning about any individual in the group  \cite{dwork2019differential} (``excessive'' is made precise in \textit{Methods}, Definition 1); this process thus obscures information about individuals that is critical to targeting. $k$-anonymous algorithms, by contrast, are designed to provide ``protection in the crowd'' by ensuring that each individual in the privatized dataset appears identical to at least $k-1$ other individuals \cite{sweeney2002k}; this process thus inhibits a targeting application's ability to differentiate between individuals who have different eligibility statuses, but who are grouped together in the $k$-anonymous algorithm.

This motivates the definition of $(B,\epsilon,\delta)$-\textit{targeted differential privacy} (TDP). At a conceptual level, TDP augments the definition of differential privacy by introducing an auxiliary parameter $B$ to ensure that $X_{priv}$ contains sufficient information to delineate between ``sufficiently different'' individuals, but insufficient information to delineate between ``sufficiently similar'' individuals (see Definition 2 in \textit{Methods}). This enables TDP to interpolate between two extremes: as $B$ increases, the privacy TDP provides approaches that of $(\epsilon, \delta)$-differential privacy; as $B$ decreases, TDP privacy approaches the protection already present in $X$. By construction, TDP algorithms can generate private datasets that enable algorithms to delineate between sufficiently different individuals, thereby permitting accurate targeting outcomes. For more information on contextual adaptations made to the definition of differential privacy, see \textit{Methods} (\textit{Related Contextual Adaptations}).



We formalize the relationship between TDP and accurate targeting in Theorem 1 (see \textit{Supplemental Information}), which has two implications for the theory and practice of targeting using privatized data. First, this theorem demonstrates an inherent limitation on the use of differential privacy in the targeting setting: as we strengthen the privacy parameters of any differentially private algorithm, we  diminish the ability to correctly target individuals. In particular, this theorem shows that results for differential privacy in Figure~\ref{fig:pets_accuracy} are not an accident -- in general, differentially private algorithms cannot reliably generate $X_{priv}$ that provide accurate predictions for targeting. Second, Theorem 1 provides potential values for $(B, \epsilon, \delta)$ to achieve a desired level of targeting accuracy for program officials to consider as they configure TDP algorithms (see Example 1 in \textit{Methods}). Together, the definition of TDP and Theorem 1 provide a framework that makes it possible to translate the needs of program designers into values for $(B,\epsilon,\delta)$ that enable provably private and accurate targeting.

Our second contribution is a novel algorithm that satisfies the definition of TDP. This innovation is required because other common privatization strategies --- beyond differential privacy and $k$-anonymity --- are likewise inappropriate for targeting settings. For instance, we cannot simply delete individuals through subsampling \cite{balle2018privacy}, as this process would mechanically exclude them from eligibility for the benefit. Likewise, it is not viable to insert ``fake'' individuals \cite{balle2018privacy, kohli2018epsilon, mckenna2021winning}, as targeting fake individuals reduces resources available to real individuals. 
Instead, Algorithm \ref{algo:edprp} provides a general-purpose method for constructing privatized datasets satisfying TDP (see \textit{Methods, Private Projection Algorithm}). Figure \ref{fig:design_concept}(G)-(I) provides a conceptual schematic of the process. The algorithm adapts theory of differentially private projections \cite{kenthapadi2012privacy, blocki2012johnson, gondara2020differentially} and Johnson–Lindenstrauss transforms \cite{chen2015johnson, nabil2017random}, and behaves as follows. The records in a original dataset $X$ are randomly mapped to a higher-dimensional space in a manner that satisfies TDP with favorable statistical and scaling properties: as the dimensionality of this space increases, the amount of noise needed to satisfy TDP decreases. To avoid the difficulties that arise in high-dimensional machine learning, the algorithm then projects these datapoints back into the original dimensionality using a two-step process: first, we privately learn the statistical structure in $X$ by computing a TDP covariance matrix; we then use the right singular values of this privatized covariance matrix to return the datapoints to their original space. The resulting set of datapoints ($X_{priv}$) satisfy TDP, and have values that approximately preserve the statistical structure of $X$. This is of central importance for machine learning applications, as similar records in $X$ tend to remain similar in $X_{priv}$, and dissimilar records in $X$ tend to remain dissimilar in $X_{priv}$.

\section*{Evaluation of the privacy-program effectiveness tradeoff}
\label{sec:eval}
Our third set of results empirically characterize the tradeoffs induced by targeted differential privacy (TDP). We perform this analysis using data from the two real-world applications previewed in Figure \ref{fig:pets_accuracy}: an anti-poverty program in Togo and a consumer lending platform in Nigeria. Broadly, we observe that TDP induces a \textit{privacy-accuracy tradeoff} \cite{dwork2019differential, petti2019differential}: as the privacy guarantee becomes stronger, the performance of the downstream application is degraded. However, considerable nuance can be found in the nature of these tradeoffs.

Since privacy can be infringed upon in a multitude of ways \cite{mulligan2016privacy, kozyreva2021public}, our empirical analysis quantifies the extent to which TDP provides protection against three distinct privacy threats that have been of central concern to legislators and privacy professionals. The first are \textit{singling-out attacks}, where an adversary isolates an individual's data in $X$. Singling-out attacks are explicitly mentioned in Europe's flagship data privacy law, the General Data Protection Regulation (GDPR, Article 29 WP 216) \cite{gdpr2014}. The second threat we consider, also referenced in the GDPR, are \textit{attribute inference attacks}, where an adversary infers the fields of $X$. The last threat we consider are \textit{distinguishing attacks}, where an adversary distinguishes between likely and unlikely values in $X$. For all three attacks, we provide a privacy protection score that ranges between 0 and 1, with higher values corresponding to stronger privacy guarantees. For technical details on the attacks and scores, see \textit{Methods} (\textit{Privacy Attacks}).

\subsection*{Humanitarian aid program  in Togo}
\label{togo_results}

We first characterize the privacy-program effectiveness tradeoff in the context of a humanitarian aid program. In 2020, the government of Togo launched the \textit{Novissi} emergency social assistance program to provide cash transfers to needy individuals during the COVID-19 pandemic. When the program was launched in rural areas, the government did not have data to indicate which individuals had the greatest need for assistance; instead, they used a combination of machine learning and personal data. Specifically, working with researchers, they trained a machine learning model to predict each individual mobile subscriber's poverty status using mobile phone metadata obtained from the country's two mobile phone operators. While the poverty scores produced from mobile phone metadata were imperfect, they were significantly more accurate than the other targeting mechanisms available to the government during the COVID-19 pandemic  \cite{aiken2022machine}.

While humanitarian crises may justify the use of personal data for the greater good \cite{rights1996international, eff2014necessary}, there still remains an imperative to protect personal privacy in such applications \cite{oliver2020mobile}. We present results that compare the exclusion errors (i.e., the number of individuals who are truly poor but who are incorrectly excluded from the program) when privatized data are used in place of original data. Figure~\ref{fig:tradeoffs}A illustrates the tradeoff between protection against singling-out attacks (which increase along the $x$-axis)
and program effectiveness (which decreases in the number of exclusion errors, shown on the $y$-axis). The curve illustrates how the tradeoff between program effectiveness and privacy protection varies with different values of $B$, the key tuning parameter for our TDP algorithm. There are three points labeled on the curve: a blue star, representing the non-private status quo (i.e., what was used by the government of Togo at the time); a green star, corresponding to our algorithm when $B = 0.25$; and a red star, corresponding to traditional differential privacy. As $B$ increases, TDP provides increasingly strong privacy protections (singling-out protection approaches 1), consistent with prior mathematical results proving that differential privacy thwarts singling-out attacks (provided the parameters $\epsilon$ and $\delta$ are sufficiently small) \cite{cohen2020towards}. When $B = 0.25$, the algorithm reaches a balance between two extremes: compared to the non-private status quo, our approach incurs a modest increase in exclusion errors in exchange for substantive increases in privacy protection ($2K$ additional exclusion errors vs. 8,480\% increase in singling-out protection); compared to differential privacy, our approach incurs a modest loss in privacy protection in exchange for a significant reduction in exclusion errors (12.6\% decrease in singling-out protection vs. $113K$ fewer exclusion errors). 

Figure~\ref{fig:tradeoffs}B illustrates a similar, albeit more nuanced, tradeoff between program performance and  protection against attribute inference attacks.  When $B = 0.25$, our method provides a substantial increase in privacy protection over the non-private status quo (by 7,300\%); compared to differential privacy, our approach  incurs a reduction in protection of just 3.4\%. Unlike singling-out protection, we find that increasing $B$ does not result in a monotonic increase in attribute inference protection (as evidenced by the jagged shape of the tradeoff curve), and the attribute inference protection never  reaches 1. This arises because differential privacy, as well as targeted differential privacy, use noise to mask identifiable information across individuals; this is distinct from using noise to mask statistical information across attributes, which would thwart an attribute inference attack. Similar results have been observed in the field of adversarial machine learning, where attribute inference attacks have successfully thwarted differentially private machine learning models \cite{jia2018attriguard, jayaraman2019evaluating}.

Figure~\ref{fig:tradeoffs}C illustrates the tradeoff between program effectiveness and protection against distinguishing attacks. Compared to the non-private status quo, the value $B = 0.25$ increases distinguishing protection from 0 to 6.66\%, at the cost of $2K$ additional exclusion errors. Differential privacy provides substantially greater protection against distinguishing attacks (83.5\%), but at the considerable cost of $115K$ additional exclusion errors. These sharper tradeoffs arise from the fact that the curve in Figure~\ref{fig:tradeoffs}C is approximately linear, which in turn arises from Theorem 1, since distinguishing attacks are intimately related to the protection provided by differential privacy \cite{nasr2021adversary}. More generally, the shape of the curve implies that it may be difficult to achieve strong protections against  distinguishing attacks while also allowing for accurate targeting, since accurate targeting requires the preservation of information to distinguish between eligible and ineligible individuals.

Lastly, we note that increased privacy does not always require a reduction in program effectiveness. For instance, when $B < 0.25$, the private algorithm actually achieves marginally higher program effectiveness than when $B = 0$. This finding is consistent with results in the machine learning literature that indicate that the careful introduction of noise can --- in some situations --- improve the predictive accuracy of a model by reducing overfitting \cite{srivastava2014dropout, hardt2016train,ji2014differential}. 



\subsection*{Consumer lending application in Nigeria}
\label{nigeria_results}

Our second empirical example characterizes the privacy-program effectiveness tradeoff in the context of a micro-lending platform in Nigeria. Like many other such platforms that have recently gained widespread popularity in LMICs \cite{francis2017digital}, the platform we study leverages the mobile phone network to distribute and collect payments on very small loans (roughly \$10). A distinguishing feature of these ``digital credit'' providers is that, since many of their customers do not have bank accounts or formal financial histories, they make lending decisions based on alternative credit scores that are derived from customers' digital ``data footprints.'' The insight behind these alternative credit scores is that data passively generated by mobile phone use -- such as the volume of international phone calls made by the customer -- are excellent predictors of loan repayment \cite{bjorkegren2020behavior}. Lenders use machine learning algorithms to detect these patterns, and  offer loans to people with a high probability of repayment. While many are excited by the potential for these products to ``bank the unbanked'' \citep{burgess2005rural}, there are also privacy concerns about how personal data are being used \cite{blumenstock2023big}.

We present results that compare the profitability of the lending program when privatized data are used in place of original data. We evaluate privacy as protection against the three attacks discussed previously, and measure profit as the revenue generated by lending decisions that result in repayment, minus the loss incurred by lending decisions that result in default (see \textit{Methods, Case Studies}).

Broadly, we observe a similar qualitative pattern in the Nigerian lending platform as with the humanitarian aid program in Togo: there are stark tradeoffs to be made at the two extremes of privacy protection, but large gains in privacy can be obtained for modest sacrifices in profit.  The tradeoff between lender profit and singling-out protection is shown in Figure \ref{fig:tradeoffs}D, with three points of interest: a blue star, representing the non-private status quo; a green star, corresponding to our algorithm when $B = 0.1$; and a red star, corresponding to differential privacy. We find that, relative to the non-private approach, our algorithm increases singling-out protection by 7,936\% in exchange for a 12\% reduction in relative profit; relative to the differentially private approach, our approach increases relative profit by 79\%, at the expense of a 11\% loss in privacy protection. Figure \ref{fig:tradeoffs}E displays the relationship between attribute inference inference and the program's profitability. Interestingly, we find that our approach has stronger privacy protection than \textit{both} the non-private approach and the differentially private approach (an increase of 8,788\% and 9\% increase respectively). In particular, increasing $B$ from 0 to 0.05 yields drastic increase attribute inference protection, whereas subsequently increasing $B$ to 0.1 results in a negligible increase in protection. In contrast to the prior two attacks, Figure \ref{fig:tradeoffs}F displays a sharp tradeoff between distinguishing protection and the program's profitability, consistent with the results of Theorem 1 that accurate targeting necessarily requires smaller values of $B$.

\subsection*{Navigating tradeoffs in practice}

While the broad takeaway from both case studies is the same --- that there generally exists a tradeoff between privacy and program effectiveness, and that large gains in one can be obtained for small reductions in the other --- there are nuanced differences between the cases that highlight the sort of tradeoffs that program officials may need to make in practice. Across both programs, for instance, increasing $B$ results in a monotonic increase in protection against singling-out and distinguishing attacks. However, in both programs we find that attribute inference protection increases very rapidly up to a certain point; beyond that point, little protection is gained by increasing $B$. Interestingly, the protection is neither monotonic in $B$, nor does it approach 1 as we increase $B$. In fact, our results in Nigeria demonstrate that differential privacy actually exhibits \textit{lower} attribute inference protection than targeted differential privacy when $B = 0.1$. 

Taken together, these results suggest that there may be more than just a privacy-accuracy and privacy-program effectiveness tradeoff; in some situations, there may be a \textit{privacy-privacy tradeoff}, where, for instance, increases in singling-out protection are achieved through decreases in attribute inference protection (and vice-versa). When such situations arise, data holders and program officials must collaborate to decide which types of privacy protections are most important to prioritize, subject to accuracy constraints. In the two empirical settings we study, singling-out and attribution inference protections are not as costly to provide, so they may be easier to prioritize. However, such decisions imply a different set of societal values and priorities, and are deeply contextual \citep{nissenbaum2004privacy}. The decisions must account for several factors, including the contents of the specific data in question, relevant privacy laws and international humanitarian standards, and the feasibility of deploying additional privacy-monitoring systems (e.g., query logging can be used to record the computations executed on the privatized data; while imperfect \cite{kohli2023differential}, logging can be used to monitor downstream users for adversarial behavior and reduce privacy risk \cite{bowman2015architecture, kroll2019privacy}).


\section*{Discussion}
\label{sec:disc}
This paper develops a framework and efficient algorithm for targeted differential privacy: a paradigm for constructing provably private datasets that still enable applications that make targeting decisions about individuals. As illustrated by the case studies in Togo and Nigeria, those privacy guarantees can impact program effectiveness. While the exact nature of that tradeoff is context-dependent, we consistently observe that large gains in privacy can be obtained for small reductions in program effectiveness.

The insights from Togo and Nigeria are likely to generalize to a wide range of settings where personal data are used to determine eligibility for policies, products, and benefits. We focus on two low- and middle-income countries (LMICs) because these are settings where, until recently, most individuals were not generating rich ``digital footprint'' data that could be re-analyzed by companies and policymakers. Over the past few years, however, a wide range of LMIC policies and products now rely on sensitive personal data to make high-stakes decisions --- including in settings of food security, financial services, and social protection  \cite{blumenstock2023big, mumtaz2021machine, aiken2023estimating, kirkpatrick2023using}. While our focus on LMICs is thus not accidental, we expect that our framework and algorithm for targeted differential privacy could likewise be applied in high-income settings. 




As we consider how privacy protections can be provided for other applications, more work is required to characterize the tradeoff between privacy and program effectiveness in each new setting. This is will require deep collaboration between data holders, program designers, and the research community. However, the technical approach we propose should not be mistaken as a substitute for the privacy protections afforded by privacy laws, policies, and practices --- rather, it complements existing protection mechanisms by providing assurances about a privatized dataset's resilience to technical data privacy attacks. We hope future research will build upon our work, and generate a deeper understanding of the real-world tradeoffs that arise when targeted differential privacy (and privacy protections more broadly) is used to make targeting decisions. In so doing, this can facilitate new opportunities for data access and data sharing, and help distill the complexities of data privacy into a collection of legible and measurable tradeoffs.

\newpage
\section*{Methods}
This section provides a brief description of the mathematical foundations of targeted differential privacy, our private projection algorithm, and the experimental methods and results. Full details of our approach, including formal definitions and proofs of the theorems referenced, can be found in the \textit{Supplemental Information}.



\subsection*{Targeted Differential Privacy}

\paragraph{Preliminaries.} Differential privacy was first introduced in 2006 to enable the computation of privacy preserving statistics \cite{dwork2006calibrating,dwork2019differential}. At the root of differential privacy is a normative distinction about what constitutes a privacy violation in statistical learning \cite{kohli2021leveraging}: inferring information about a population is not considered a privacy violation, provided the statistical analysis does not reveal ``too much'' about any individual.

Differential privacy achieves this through the concept of neighboring datasets. Let $||\cdot||_2$ denote the $L_2$-norm, and $\L_2^{n \times d}$ denote the set of of real $n \times d$ matrices where the $L_2$ norm of each row is at most 1. We say two databases $X, X' \in \L_2^{n\times d}$ are \textit{classic neighbors} if they agree on exactly $n-1$ rows. Conceptually, a randomized algorithm is differentially private (DP) if the probability of creating a privatized dataset $X_{priv}$ is essentially the same (up to $\epsilon$ and $\delta$) for every pair of neighboring datasets. 

\begin{defn}
    A randomized algorithm $A$ from $\L_2^{n\times d}$ to $\mathbb{O}$ satisfies $(\epsilon,\delta)$-\textit{differential privacy} if for all classic neighbors $X,X' \in \L_2^{n\times d}$ and for all events $E \in \mathbb{O}$, $\P(A(X) \in E) \le e^{\epsilon}\P(A(X') \in E) + \delta$.
\end{defn}

Targeting applications are different from population-level analyses since the ability to target individuals involves determining their eligibility status. This suggests that we must adapt the notion of neighboring datasets to preserve sufficient information in $X_{priv}$ to differentiate between different eligibility statuses. We say  $X, X' \in \L_2^{n\times d}$ are \textit{$B$-neighbors} if they are classic neighbors, and for the single row $i$ where they disagree, $||X_i - X'_i||_2 \le B$. When $B = 2$, $B$-neighbors coincides with classic neighbors. This motivates the definition of targeted differential privacy (TDP) below, which quantifies protection over $B$-neighboring datasets in place of classic neighboring datasets. 

\begin{defn}
    A randomized algorithm $A$ from $\L_2^{n\times d}$ to $\mathbb{O}$ satisfies $(B, \epsilon,\delta)$-\textit{targeted differential privacy} if for all $B$-neighbors $X,X' \in \L_2^{n\times d}$ and for all events $E \in \mathbb{O}$, $\P(A(X) \in E) \le e^{\epsilon}\P(A(X') \in E) + \delta$. By definition, when $B = 2$ we recover $(\epsilon, \delta)$-differential privacy. 
\end{defn}

Through this alteration, the auxiliary parameter $B$ ensures $X_{priv}$ contains sufficient information to distinguish different individuals (i.e., those whose data are more than $B$ apart), but insufficient information to learn the specific data values of similar individuals (those whose data are at most $B$ apart).

\paragraph{Related contextual adaptations.}  
There is a rich scholarly history of altering the definition of differential privacy to better align with context-specific goals outside of statistical domains. For example, the concept joint differential privacy has been applied matching and allocation problems to cope with the fact that these problems are provable impossible to solve under differential privacy \cite{kearns2014mechanism, hsu2014private}. In another example, a variant of differential privacy was introduced to enable the aggregation of privacy preferences in a self-referential voting system \cite{kohli2018epsilon}. And, the concept of geo-indistinguishability was introduced to adapt differential privacy to location-based systems \cite{andres2013geo}. In all of these cases, the concept of differential privacy was altered in a principled manner to be congruous with the context-specific goals of the task at hand.

The definition of targeted differential privacy follows in this tradition by using the notion of $B$-neighbors to better account for the context-specific goals of targeting. This notion of $B$-neighbors has been used previously in several settings unrelated to targeting \cite{desfontaines2019sok}: in spatial data analysis, this requirement is called differential privacy under a neighborhood \cite{fang2014differential}; in privacy for compressed sensing applications, this definition is called constrained differential privacy \cite{zhou2009differential}; and in privacy for moving-horizon estimators, this definition is referred to as adjacent differential privacy \cite{desfontaines2019sok, krishnan2020probabilistic}. The definition of targeted differential privacy is also similar in spirit to metric differential privacy \cite{chatzikokolakis2013broadening}, which is an alternative generalization of classic differential privacy for contexts where neighboring definitions are ill-suited (such as geolocation and smart-metering tasks). We differentiate from prior work by showing how the parameter $B$ affects and enables targeting applications below. 


\paragraph{Necessary conditions for accurate targeting.} Using $(B,\epsilon,\delta)$-targeted differential privacy, we can formalize the tradeoffs that manifest in all targeting applications. To enable accurate targeting, we need a targeting process to behave ``similarly'' when $X_{priv}$ is used in place of $X$. This is formalized by assigning sufficient probability $\gamma$ to the event that an individual's eligibility status remains the unchanged when privatized data is used in place of original data. Theorem 1 in \textit{Supplemental Information} proves that if this accuracy conditions holds, then $B, \epsilon$, and $\delta$ must exhibit the following relationship:
$$ 
    \ceil{2B^{-1}} \ge \ceil{\epsilon^{-1}\ln(Q)}
$$
where 
$$
Q = \frac{\delta + \gamma(e^{\epsilon}-1)}{\delta + (1-\gamma)(e^{\epsilon}-1)}
$$ 
This theorem provides the necessary conditions that \textit{every} targeted (and classic) differential privacy algorithm must obey in order to enable accurate targeting. This has two implications for the theory and practice of provably private targeting. 

The first implication is that accurate targeting (as specified by $\gamma$) is provably impossible for classic differential privacy with small parameter values. To unpack this relationship, consider the case when $\delta = 0$. Then, $Q = \gamma /(1-\gamma) $, so the inequality above reduces to $\ceil{2B^{-1}} \ge  \ceil{\epsilon^{-1}\ln(\gamma/(1-\gamma))}$. As we strengthen the privacy parameter $\epsilon$ to the point where $\epsilon < \ln(\gamma/(1-\gamma))$, $\ceil{\epsilon^{-1}\ln(\gamma/(1-\gamma))} \ge 2$; hence, $\ceil{2B^{-1}} \ge 2$, yielding $B < 2$. A similar result also holds when $\delta > 0$ (see Example 1 below). Thus, when $\epsilon$ and $\delta$ are small, classic differential privacy cannot permit accurate targeting.

Second, this result can be used proactively to provide a collection of potential candidate values for $B, \epsilon,$ and $\delta$ for \textit{any} targeted differential privacy algorithm based on the accuracy needs of a situation. This follows by the contrapositive of Theorem 1: if $(B,\epsilon,\delta)$ satisfy $\ceil{2B^{-1}} 
 < {\ceil{\epsilon^{-1}\ln(Q)}}$, then the accuracy bound does not hold. We present one such example below. 

\begin{ex}   
\label{ex:togo_params} A program official in Togo wants to use a TDP algorithm for their anti-poverty program with a value of $B$ such that $2B^{-1} \in \Z$. To ensure humanitarian relief is correctly dispersed, they require $99\%$ confidence that an individual's eligibility status will remain unchanged when $X_{priv}$ is in place of the original data $X$. For such $B$ and $\gamma$, the bound from Theorem 1 simplifies to 
$$
B \le \frac{2}{\ceil{\epsilon^{-1}\ln(Q)}}
$$
If the program official uses $\epsilon = 1$ and $\delta = 1\times 10^{-4}$, then by they only need to consider $B \le 0.4$, since no TDP algorithm exists to meet the accuracy bound with $B > 0.4$. Alternatively, if program officials are willing to use $\epsilon = 4$, then they only need to consider $B \le 1$.
\end{ex}

In practice, the parameter $B$ required to enable accurate targeting may need to be smaller than those produced by the bound in Theorem 1. Semantically speaking, this follows as Theorem 1 provides the \textit{necessary} conditions for $(B, \epsilon, \delta)$ to meet a desired accuracy bound, but not \textit{sufficient} conditions. In general, the extent to which $B$ must be reduced to invoke a sufficient condition for accurate targeting is task specific, and will depend on the underlying relationship between the original data's features and the targeting outcome in question, as well as the specifications of the targeting process and the specific TDP algorithm used.

\subsection*{Private Projection Algorithm}  
Our private projection algorithm (Algorithm \ref{algo:edprp}) requires 7 inputs: the dataset $X$ that we wish to privatize, 5 privacy parameters $(B, \epsilon_1, \epsilon_2, \delta_1, \delta_2)$, and a parameter $k$ that represents the dimensionality of a mathematical projection used internally by our algorithm. The strategy of the algorithm is based on Gondara and Wong's random projection method \cite{gondara2020differentially}, with three technical alternations (detailed in \textit{Supplemental Information}). Our algorithm begins in Steps 1-3 by randomly map datapoints in $X$ from $\R^d$ to $\R^k$ in a manner that satisfy $(B,\epsilon_1, \delta_1)$-TDP (Proposition 2 in \textit{Supplemental Information}). In Steps 4-6, the algorithms map these points back to $\R^d$ using the data's covariance structure in a manner that satisfy $(B,\epsilon_2, \delta_2)$-TDP (Lemma 7 in \textit{Supplemental Information}). Due to this modular design, our private projection algorithm satisfies $(B, \epsilon_1 + \epsilon_2, \delta_1+\delta_2)$-targeted differential privacy (Theorem 2 of \textit{Supplemental Information}).

\begin{algorithm}
\KwIn{
\begin{itemize}
    \item Dataset $X$
    \item Privacy parameters $B \in (0,2], \epsilon_1 > 0, \epsilon_2 \in (0,1), \delta_1 \in (0,1), \delta_2 \in (0,1)$
    \item Projection parameter $k$
\end{itemize}
}
\KwOut{Privatized matrix $X_{priv}$}
Sample $R \in \R^{d \times k}$, where each entry is sampled uniformly at random from $\{-1,0,1\}$. \\
Compute the projection $P = k^{-1}XR$ \\
Privatize the projection via $P_{priv} = P + G$ where $G \in \R^{n \times k}$, $G_{i,j} \sim N(0,\sigma^2)$, and $$\sigma = \frac{B}{\sqrt{k}}\sqrt{d\ln((2/3)(e-1) + 1) -k^{-1}\ln(\delta_1/2)} \frac{\sqrt{2(\ln(1/\delta_1) + \epsilon_1)}}{\epsilon_1} $$\\
Compute the privatized covariance matrix of $C_{priv} = X^T X + G$ where $G_{i,j} \sim N(0, \sigma^2)$ and $$\sigma = 2B\sqrt{2\ln(1.25/\delta_2)}/\epsilon_2$$ for all $i \ge j$, and $G_{i,j} =  G_{j,i}$ for all $i < j$.\\
Compute the right singular vectors of $C_{priv}$. Call them $V^{T}$. \\
\textbf{Return} $X_{priv} = P_{priv}(V^{T}R)^{\dagger}V^{T}$, where $\dagger$ represents the pseudo-inverse.
\caption{Private Projection Algorithm}
\label{algo:edprp}
\end{algorithm}


\medskip

\noindent \textbf{Constructing TDP (and DP) datasets.} We experimentally evaluate the impact our private projection algorithm induces when $X_{priv}$ is used for targeting in place of $X$. In both of our case studies, we normalize each row of $X$ to ensure the $L_2$ norm is at most 1. Hence, when $B=2$ our algorithm generates $X_{priv}$ with classic differential privacy guarantees. For our simulations, we define the \textit{parameter grid} of our experiments as all possible 6-tuples $(B, \epsilon_1,\delta_1,\epsilon_2,\delta_2,k)$, where
\begin{itemize}
    \item $B \in \{0.05, 0.075, 0.1, 0.25, 0.5, 0.75, 1, 2\}$
    \item $\epsilon_1 \in \{2, 3\}$
    \item $\delta_1 = \frac{2}{3}\delta$
    \item $\epsilon_2 \in \{0.5, 0.9999\}$
    \item $\delta_2 = \frac{1}{3}\delta$
    \item $k = 10^4$
\end{itemize}

\noindent and the specific value for $\delta$ used to compute $\delta_1$ and $\delta_2$ is determined by the number of individuals in each of the case studies. As a technical note, we use 0.9999 instead of 1 for the value of $\epsilon_2$, as the Gaussian perturbation of Step 4 requires $\epsilon_2$ in (0,1) \cite{dwork2006our, balle2018privacy, dwork2014analyze} (see Lemma 7 of \textit{Supplemental Information}). 



Following best practices of statistical agencies that utilize differential privacy, we set $\delta$ to be less than the inverse of the number of individuals whose data is being privatized by an algorithm \cite{page2018differential, abowd20222020}. In the Togolese anti-poverty experiments, we privatize the data all at once with our algorithm, and hence set $\delta = (n+1)^{-1}$, where $n$ is the number of individuals in the Togolese dataset. In the Nigerian micro-lending experiments, we construct $X_{priv}$ by first randomly partitioning the original dataset $X$ into 6 sub-matrices of nearly equal size; we then apply the private projection algorithm to each sub-matrix, and combine the privatized sub-matrices together to form $X_{priv}$. Since our private projection algorithm is applied to disjoint datasets of size at most $\ceil{n/6}$ (where $n$ is the number of individuals in the Nigerian dataset), we set $\delta_1$ and $\delta_2$ using the value of $\delta = (\ceil{n/6} + 1)^{-1}$ when we privatize each sub-matrix. By Lemma 3 in \textit{Supplemental Information}, this parallelization strategy does not increase the privacy loss of our algorithm. 

This partitioning approach improves both the runtime and accuracy of our algorithm. Parallelization removes computational barriers that arise when performing matrix operations on large datasets (especially as $n,d,$ and $k$ increase). Additionally, the parallel computation of privatized disjoint sub-matrices of $X$ improves the accuracy of $X_{priv}$ by decreasing the amount of noise introduced in Steps 3 and 4 of the algorithm (as $\delta_1$ and $\delta_2$ are inversely related to $n$).

\medskip



\subsubsection*{$k$-anonymity baseline} As an additional point of comparison, we examine the program effectiveness in our two case studies when $X_{priv}$ satisfies $k$-anonymity \cite{sweeney2002k}, another commonly applied privacy standard. We utilize the Mondrian algorithm \cite{lefevre2006mondrian} to construct a $k$-anonymous $X_{priv}$, which has been empirically shown by Slijepvcevic et al. (2021) to outperform other $k$-anonymous algorithms for machine learning tasks \cite{slijepvcevic2021k}. To preserve experimental consistency across the research literature,  we use the same open-source Mondrian implementation as Slijepvcevic et al. (2021), which requires three inputs: the original dataset $X$, the privacy parameter $k$ (not to be confused with the value of $k$ in the private projection algorithm), and a list of \textit{quasi-identifiers} in $X$ (i.e., the names of the columns that need to be privatized). Since our private projection algorithm introduces noise to every column of $X$, we configure Mondrian to treat every column of $X$ as a quasi-identifier. This enables a baseline level of consistency between the targeted (and classic) differential privacy experiments and $k$-anonymity experiments in the sense that every column of $X$ is subject to privacy alterations. In all of our experiments, we consider values of $k \in \{2,...,10\}$.

\medskip





\subsection*{Case Studies}

\subsubsection*{Anti-poverty program in Togo}

Following in Aiken et al. \cite{aiken2022machine}, we use a machine learning model to determine eligibility for Togo's Novissi (anti-poverty) cash transfer program based on mobile phone metadata. The algorithm is trained on a dataset that matches individual survey responses to data obtained from mobile phone operators in Togo. The survey makes it possible to observe the average daily consumption (per capita, purchasing-power-parity adjusted) of a nationally representative sample of roughly 4,200 mobile subscribers in Togo. We refer to this measure of consumption as our target variable $y$, which we will try to predict from data on how those subscribers use their mobile phones.  The second dataset, $X$, obtained from Togo's two mobile phone operators, contains 10 ``features'' that quantify the phone use of each of the mobile phone subscribers in survey, such as the number of nighttime calls made, the entropy of the number of contacts they text during the weekday, and the percent of calls in specific prefectures.  We normalize each column of this dataset of features $X$ to have mean 0 and standard deviation 1, and normalize each row to be in the 10-dimensional $L_2$ ball. The target variable is normalized to be in the $[0,1]$ interval. Additional statistical information on the features and targets can be found in Table S1 in \textit{Supplemental Information}.

We use 5-fold cross-validation to train a ridge regression model to predict consumption $y$ from mobile phone use $X$. To simulate the targeting of Novissi, we label an individual as eligible for the Novissi program if their predicted  consumption is below the $29^{th}$ percentile of all predictions (this corresponds to the budget constraint of the actual program implemented by the government). For each fold, we record the average accuracy and average false positive rate. Using these estimates, we estimate the average number of ``true poor'' individuals (i.e., those whose actual consumption is below the $29^{th}$ percentile) who are incorrectly excluded from the program because their predicted consumption is above the $29^{th}$ percentile. To generate estimates of exclusion errors that are nationally representative, we scale the exclusion errors from the surveyed sample of 4,200 to Togo's estimated adult population (i.e., individuals at least 15 years of age) of approximately 4.95 million individuals. (According to the World Bank, Togo's estimated population in 2019 was 8,243,094 \cite{wbTogo2019}; and based on estimates from the United Nations Population Fund, approximately 60\% of individuals in Togo are at least 15 years of age \cite{unpfTogo}). 




To show how targeted differential privacy affects the performance of this program, we simulate how targeting would have worked if $X_{priv}$ had been used in place of $X$. Specifically, for each 6-tuple of the privacy parameters $(B,\epsilon_1,\delta_1,\epsilon_2,\delta_2,k)$ we perform 50 simulations whereby we generate $X_{priv}$ using our private projection algorithm, and then use the same 5-fold cross-validation splits to compute the average accuracy and average false positive rate of the targeting procedure. From these estimates, we compute the average number of true poor excluded from the program at the national scale. 

To simulate program performance under $k$-anonymity, we construct $X_{priv}$ using the Mondrian algorithm for each privacy parameter $k \in \{2,...,10\}$. We then use the 5-fold cross validation process to compute the average accuracy and average false positive rate of the targeting procedure. Since Mondrian's algorithm does not utilize randomness, we only generate one $X_{priv}$ for each value of $k$. Using these estimates, we compute the average number of true poor excluded from the program at the national scale, and report the results for the most accurate parameter value ($k=2$).




\subsubsection*{Micro-lending platform in Nigeria}

Our analysis of the privacy-program effectiveness tradeoff in the Nigerian micro-lending case study relies on three data sources. The first dataset $X$ consists of 15 features of 20,788 individuals, which quantify how each individual uses their phone. The second dataset $y$ characterizes individuals as low-risk or high-risk borrowers, as determined by an alternative credit score derived by the Nigerian lender. 
Our third dataset contains information on the size of the first loan an individual applied for, which we use to determine the profit or loss associated with repayment or default on the loan. All three datasets contain a hashed pseudonymous identifier that enables us to associate records across datasets. We normalize $X$ and $y$ in the same manner as the Togolese anti-poverty case study. Additional statistical information on the features and credit scores can be found in Table S1 in \textit{Supplemental Information}.

We train a logistic regression model (using 5-fold cross validation) to predict an applicant's riskiness (and hence their eligibility for a loan) using the 15 features derived from their mobile phone metadata. 
To calculate the profit associated with lending decisions based on the privatized versus the original data, we assume that profit accrues to the lender when they offer loans to low-risk borrowers, and decreases when they offer loans to high-risk borrowers. Specifically, for a loan applicant $j$, let $l_j$ denote the requested loan amount, $r_j$ denote the lender's revenue on the loan, and $i_j$ denote the interest on borrower $j$'s loan. Denote the lender's profit from individual $j$ as $\pi_j$, which is determined by the following four cases:%
\begin{enumerate}
    \item If the loan eligibility algorithm correctly classifies $j$ as high risk, then the lender does not offer $j$ a loan. In this case, the lender earns no profit, so $\pi_j = 0$.
    \item If the loan eligibility algorithm correctly classifies $j$ as a low risk, then the lender offers $j$ a loan and $j$ repays the entire amount. In this case, the lender's profit is the revenue generated by the loan, so $\pi_j = r_j$.
    \item If the loan eligibility algorithm incorrectly classifies $j$ as a low risk, then the lender offers $j$ the loan but is unable to collect payment. For our analysis, we assume the worst-case outcome for the lender in which $j$ repays none of the loan, and the lender loses the initial value of the loan, plus any interest they would have earned on they loan amount had they not extended the loan. Thus, $\pi_j = - (l_j + i_j)$.
    \item If the loan eligibility algorithm incorrectly classifies $j$ as a high risk, then the lender does not offer $j$ a loan. In this setting, the lender does not receive the revenue they would have earned if they had offered the loan to the low risk applicant, and $\pi_j = -r_j$.
\end{enumerate}

The program's profitability is given by $\pi = \sum_{j \in [n]} \pi_j$. Across all folds, we compute the program's average classification accuracy and the profitability. 

To quantify the impact of using a privatized dataset $X_{priv}$ in place of $X$, we follow the same experimental process used in the Togolese experiments with one notable alteration: to compute estimates for the targeted and classic differential privacy experiments, we use the partitioning strategy described in \textit{Constructing TDP (and DP) datasets} to improve the runtime and accuracy of private projection algorithm.


\subsection*{Privacy Attacks}

We measure the privacy afforded by our algorithm against three threats: singling-out attacks, attribute inference attacks, and distinguishing attacks. Each of these privacy measures --- described in more detail below --- quantifies an adversary's inability to infer information about the original dataset $X$. These measures range between 0 and 1, with higher values corresponding to higher levels of privacy protection. We do not examine the privacy protection afforded by the $k$-anonymous Mondrian algorithm, as the targeting accuracy it exhibited was too stark to be of use in practice (see Figures \ref{fig:pets_accuracy}(A) and (B)). 

In all of our attacks, we assume the adversary has full access to the privatization algorithm, and can hence use any information about the algorithm as part of their attack strategy. This modeling choice is analogous to Kerckhoffs' Principle from cryptography, which states that ``the cipher method must not be required to be secret, and it must be able to fall into the hands of the enemy without inconvenience'' \cite{katz2007introduction}.
This is because cryptographic methods that depend on secrecy can be easily compromised if that secrecy is violated. As such, to illustrate the worse-case privacy scenario, we conduct our privacy attacks with full knowledge of the privatization algorithm to provide a realistic testing scenario. 

\paragraph{Singling-out protection.} Singling-out protection quantifies $X_{priv}$'s resilience to \textit{isolation} (i.e., inferring the existence of a unique row in $X$ containing a certain collection of values). As defined by Article 29 Working Party 216 of the GDPR, singling-out ``corresponds to the possibility to isolate some or all records which identify an individual in the dataset'' \cite{gdpr2014}. We follow the approach of \textit{predicate singling-out} pioneered by Cohen and Nissim \cite{cohen2020towards}: using both the privatized dataset $X_{priv}$ and the privatization algorithm $A$, can an adversary infer a set of conditions that is satisfied by exactly 1 row in $X$? If so, then the adversary has the means to isolate a single row in $X$ (and hence single out this individual). We define the singling-out protection of a dataset as the proportion of individuals we are unable to single-out.  

We quantify the singling-out protection present in the original and privatized datasets in both of our case studies using the following methodology. In the original setting, the data $X$ is unaltered, i.e., $X_{priv} = X$. Hence, an adversary can trivially single-out an individual by determining which rows in $X_{priv}$ are unique. That is, for every row in $X_{priv}$, an adversary can construct the \textit{identity conditions for row $i$} via ``$x_i = (X_{priv})_i$ for all features $i$.'' Since $X_{priv} = X$, the identity conditions for row $i$ singles-out a row in $X$ if and only if $(X_{priv})_i$ is unique in $X_{priv}$. Therefore, the original dataset's singling-out protection score is given by the fraction of non-unique rows in $X$. 


In contrast, when $X_{priv}$ is given by our private projection algorithm, the presence of noise ensures that the identity conditions will almost surely fail to isolate a row. For this reason, we develop a new singling-out attack that leverages the design of our algorithm (called a \textit{net attack}) that generate better predicates by modifying the identify conditions to account for the presence of noise in $X_{priv}$. Let the \textit{$d$-dimensional net of size $\eta \in \R^d$}, denoted as $\text{Net}(\eta; X_{priv})$, be the set of $z \in \R^d$ such that each entry $z_j$ is within distance $\eta_j$ of any row in $X_{priv}$. Then a row $X_i$ is singled-out by the net attack  if and only if $X_i \in \text{Net}(\eta; X_{priv})$ and $|\text{Net}(\eta;X_{priv})| = 1$.

We generate candidate values of $\eta$ based the standard deviation in each column of $X_{priv}$. The rationale for this approach stems from the following two observations: since the standard deviation measures the typical spread of datapoints in each column, the adversary could leverage some constant multiple of the standard deviation to generate possible net sizes; and since our algorithm introduces noise, the standard of each privatized column is likely larger than the standard of each original column. 

We use five different constants in $ \big\{\frac{1}{10}, \frac{1}{3}, \frac{1}{2}, \frac{2}{3}, 1\big\}$ to scale the column-wise standard deviations of $X_{priv}$ to generate five different candidate values for $\eta$. For each parameter tuple in the parameter grid, we run our private projection algorithm 50 times and execute the net attack on each privatized dataset generated using each of the candidate values for $\eta$. For each $\eta$, we compute the average singling-out protection. To appropriately measure the privacy protection afforded to an individual, we report the protection afforded based on the adversary's most successful attack. Hence, we compute $X_{priv}$'s singling-out protection for a parameter tuple as the worst-case protection across all values of $\eta$.

\paragraph{Attribute inference protection.} In contrast to singling-out protection, attribute inference protection quantifies $X_{priv}$'s resilience to feature reconstruction (i.e., inferring column values of $X$). As conceptualized in Article 29 Working Paper 216 of the GDPR, attribute inference refers to ``the possibility to deduce, with significant probability, the value of an attribute from the values of a set of other attributes'' \cite{gdpr2014}. Our framework of attribute inference protection follows the approach described in Giomi et al. \cite{giomi2022unified}, where an adversary has access to the privatized data $X_{priv}$, the privatization algorithm $A$, and $h$ columns of the original data $X$ on all $n$ individuals; using these three objects, the adversary's goal is to infer the missing $d-h$ columns of $X$.

For our analysis, we consider $h \in \{1, \ceil{d/2}, d-1\}$, where the columns available to the adversary are randomly selected when $h \ne d-1$. For our attack method, we use the inference attack module in Anonymeter \cite{giomi2022unified}. This attack method is similar to the approach pioneered by Narayanan and Shmatikov  \cite{narayanan2008robust}, and proceeds as follows. Using the $h$ columns in the original dataset, the adversary performs a nearest neighbor search on the same $h$ columns of the privatized data. Once this record is found in the privatized dataset, the adversary uses $d-h$ values in the privatized data as their guess for the $d-h$ values in the original data. We say the attack succeeds for individual $i$ on unknown column $j$ if the relative error between the estimate and the actual value is at most 5\%, and define the protection score of a column as the proportion of individuals for which the attack fails. We follow the holdout-approach of Giomi et al. \cite{giomi2022unified} and transform this protection score into a relative protection score: since $X$ and $X_{priv}$ share many statistical properties, inferences due to the preservation of statistical structure between their columns are qualitatively different than those that arise due to $X_{priv}$'s retention of precise information on individuals \cite{DPorg-inference-is-not-a-privacy-violation}; as such, this transformed score refines the original score by quantifying $X_{priv}$'s inference risk that cannot be explained by its statistical similarity to $X$ (see \textit{Supplemental Information} for technical details). 

To determine the baseline level of protection in the original dataset, we use the following methodology. When $h \ne d-1$, for every column of $X$ we run Anonymeter 50 times, randomly selecting $h$ other columns during each iteration, and compute the average relative protection score. When $h = d-1$, for every column of $X$ we run Anonymeter once (as there is no randomness involved in the selection of the other columns) and compute the relative protection score.  To assess the protection of our privatized datasets, for each 6-tuple of algorithm parameters in the parameter grid, we replicate the above process 50 times (even when $h = d-1$) to capture the sampling variability of Algorithm \ref{algo:edprp}. We define the attribute inference protection of the original and privatized datasets to the lowest average protection score across all values of $h$ and all columns $j$ (i.e., protection against the most successful attack). 

\paragraph{Distinguishing protection.} Distinguishing protection is based on the mathematical construct of the \textit{privacy loss random variable}, which quantifies an adversary's inability to infer which potential datasets likely produced $X_{priv}$. This protection is modeled by the following cryptographic game \cite{nasr2021adversary}. An adversary has access to $X_{priv}$ and the algorithm $A$ that generated it. Given any two datasets $\hat{X}$ and $\tilde{X}$ that differ in exactly one row, the adversary attempts to determine which dataset was more likely to create $X_{priv}$. The information-theoretic extent to which an adversary can do so is quantified by the privacy loss random variable. In the \textit{Supplemental Information}, we construct our measure of distinguishing protection, which is proportional to the inverse of the mean of the privacy loss random variable. By definition, the distinguishing protection of the original dataset is 0. We derive the distinguishing protection formula for our private projection algorithm, and compute its value for every parameter tuple in our parameter grid. 

\paragraph{Interpreting privacy protection scores.}

The first two measures of privacy protection are computed via privacy audits \cite{cummings2024advancing}, where we attack datasets to determine their resilience against singling-out and attribute inference attacks. A methodological limitation of privacy audits stems from their empirical nature; they can only be used to surface examples of privacy risk, but cannot be used to show an approach is risk-free \cite{kohli2021leveraging, cummings2024advancing}. 
For this reason, the privacy protection scores for singling-out and attribute inference protection should be interpreted as an upper bound on the actual privacy provided by $X_{priv}$. Despite this limitation, privacy audits can still be used to provide complementary measures of privacy protection for formal approaches (such as TDP and DP, whose protection is quantified via the parameters $B, \epsilon,$ and $\delta$) to understand the privacy afforded against real-world attacks \cite{cummings2024advancing}.  Distinguishing protection, in contrast, is computed via the mathematical properties of our algorithm, and provides a mathematical lower bound on the protection afforded.


\subsection*{Data availability}
The mobile phone datasets from Togo and Nigeria contain detailed information on billions of 
mobile phone transactions. These data contain proprietary and confidential information belonging to a private telecommunications operator and cannot be publicly released. Upon reasonable request, we can provide information to accredited academic researchers about how to request the proprietary data from the telecommunications operator.  Contact the corresponding author for any such requests.

\newpage
\bibliographystyle{ieeetr}
\bibliography{references}


\section*{Acknowledgements}

This work was supported by the Lab for Inclusive FinTech (LIFT) at the University of California, Berkeley; the Bill and Melinda Gates Foundation via the Center for Effective Global Action; and the National Science Foundation under CAREER Grant IIS-1942702. The views, opinions, and/or findings expressed are those of the author(s) and should not be interpreted as representing the official views or policies of the U.S. Government. We are grateful to Emily Aiken, Suraj Nair, Dan Cassara, Paul Laskowski, and Seth Garz for providing early feedback on this work.

\section*{Author contributions statement}


JB conceived of the research idea. NK developed the algorithm, proved the theorems, and conducted the empirical measurements. JB and NK wrote the manuscript.

\section*{Competing interests}

The author(s) declare no competing interests.

\section*{Materials and correspondence} 

Correspondence and requests for materials should be addressed to Joshua E. Blumenstock.


\newpage
\section*{Figures}

\begin{figure}[!ht]
\centering
\includegraphics[width=\textwidth]{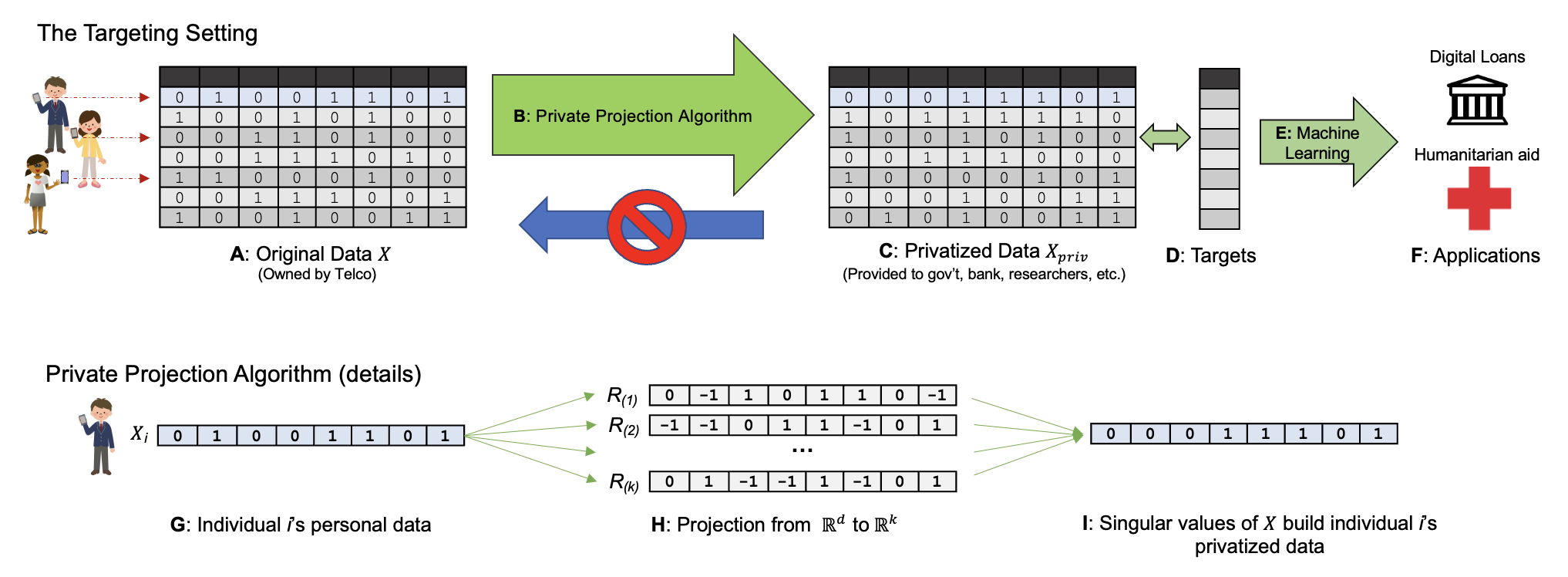}
\caption{Overview of targeting setting (top) and the private projection algorithm (bottom). (A) Personal data are held by a data holder (e.g., mobile network operator). These data are given to (B) an algorithm that generates a version of the data with a provable privacy guarantee. (C) This private version of the data can then be joined with (D) training ``labels'' from third parties that indicate the true eligibility for a subset of the population. (E) Machine learning models learn how to predict eligibility status for the full population for whom eligibility status is not directly observed, but for whom private data exist. (F) These predictions can then be used in downstream applications.  Bottom figures provide a conceptual sketch of how the private projection algorithm works. (G) Each individual's raw data $X$ is projected into $\R^k$ (H), and is then  projected back to $\R^d$ using the singular values of $X$ (I). The resulting record corresponds to individual $i$'s record in $X_{priv}$.}
\label{fig:design_concept}
\end{figure}

\begin{figure}[!p]
\centering
\includegraphics[width=\textwidth]{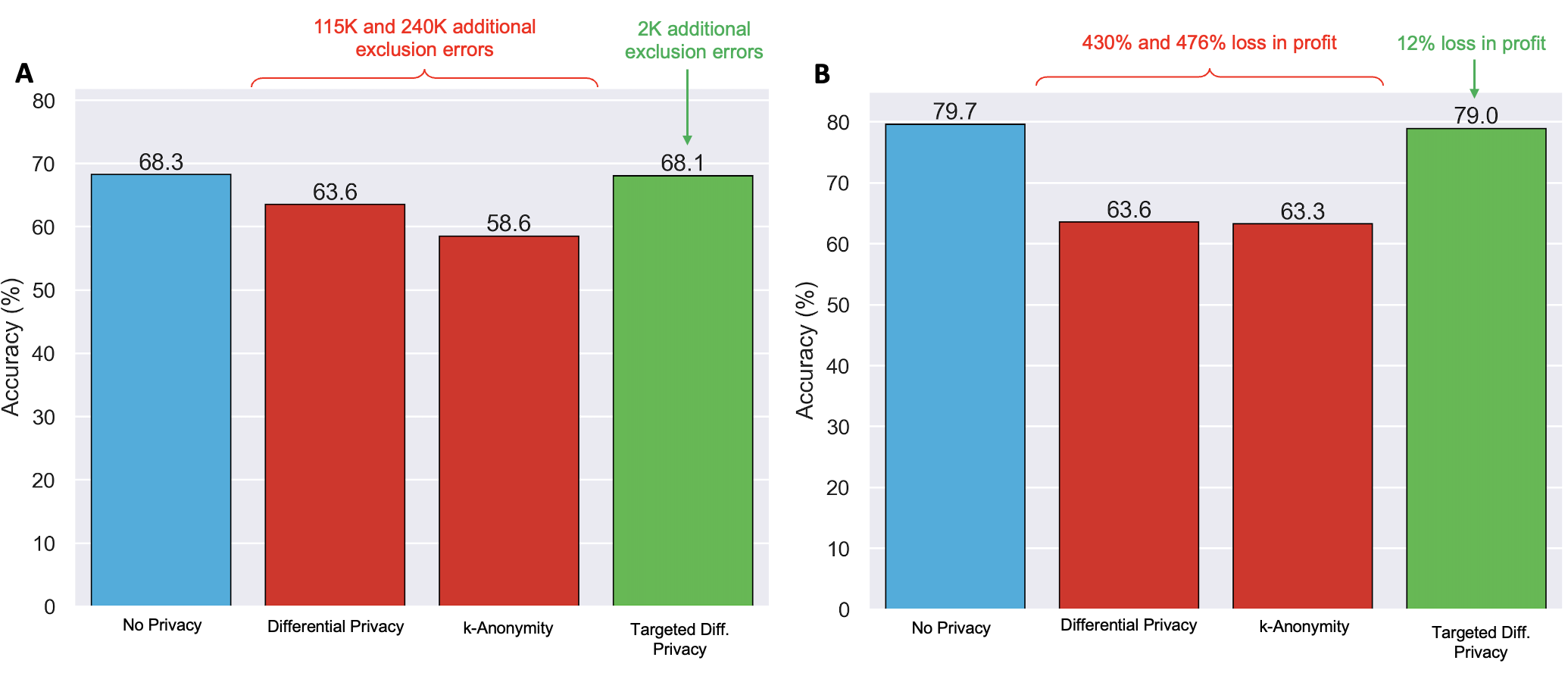}
\caption{The impact of different privacy-enhancing technologies on targeting accuracy in two real-world settings. (A) Anti-poverty program in Togo. In simulations of a nationwide humanitarian program, differential privacy ($\epsilon \approx 4$) and k-anonymity ($k=2$) would increase exclusion errors by $115K$ and $240K$, relative to the non-private status quo. Our approach (targeted differential privacy) increases exclusion errors by $2K$ ($B = 0.25$). (B) Micro-lending platform in Nigeria. Simulating the accuracy of credit scoring algorithms used to extend loans to individuals without a formal financial history, differential privacy ($\epsilon \approx 3$)  and $k$-anonymity ($k=2$) would reduce the profits of the program by 430\% and 476\%, respectively. Targeted differential privacy would reduce profits by 12\% ($B = 0.1$).}
\label{fig:pets_accuracy}
\end{figure}

\begin{figure}[!p]
\centering
\includegraphics[width=\textwidth]{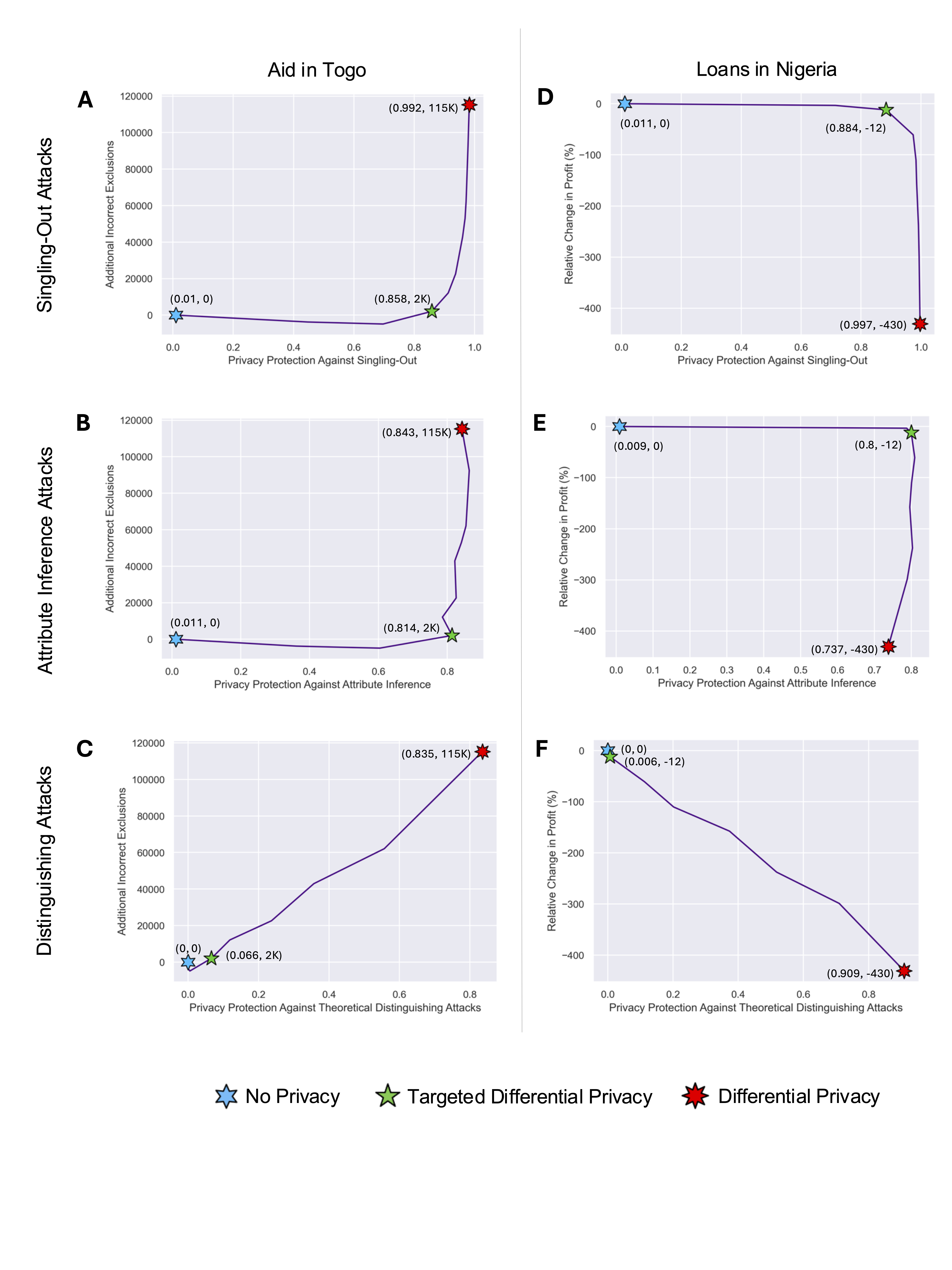}
\vspace*{-30mm}
\caption{Empirical tradeoffs between privacy and program effectiveness for a humanitarian program in Togo (left panels) and a digital lending platform in Nigeria (right panels). Privacy protections are shown for singling-out attacks (top row), attribute inference attacks (middle row), and distinguishing attacks (bottom row).  For all panels, blue stars represents the non-private status quo, and red stars represents  classic $(\epsilon, \delta)$-differential privacy. Green stars corresponds to targeted differential privacy, with $B = 0.25$ and $\epsilon\approx 4$ for the Togolese program and $B = 0.1$ and $\epsilon \approx 3$ for the Nigerian application.}
\label{fig:tradeoffs}
\end{figure}

\newpage

\section*{Supplemental Information}

\section{Supplementary Note: Targeted Differential Privacy}
\label{maths}
\subsection{Formalizing the targeting problem and setting}
\label{app:math_formalize}

Our framework of targeted differential privacy is motivated by an interaction between a data holder (such as a mobile network operator) and a downstream program official who wishes to make use of the data. The program official seeks to target a subset of $n$ individuals for some benefit, based on a \textit{targeting variable} $y$. This targeting variable could be an individual's or household's income \cite{deaton1997analysis, bradbury2004targeting}, their daily or household consumption \cite{blumenstock2016fighting, aiken2022machine}, or their credit score \cite{bjorkegren2020behavior}. However, the program official faces an information restriction: they only possess target variables for a non-empty proper subset $I \subset [n] = \{1,...,n\}$ (these individuals are \textit{in the training sample}). Call these targets $y_I \in \R^{|I|}$. Let $O = [n] \cap I^{c}$ denote the subset of individuals \textit{out of the training sample}.

In certain humanitarian situations, it is infeasible to directly solicit targeting information on $O$  \cite{blumenstock2016fighting, aiken2022machine, aiken2023program}. To overcome these informational constraints, program officials have partnered with data holders (such as a mobile network operator) to solicit data on individuals in $[n]$ to predict these targeting variables using techniques from machine learning \cite{blumenstock2016fighting, aiken2022machine, aiken2023program}. We represent such data as an $n \times d$ matrix $X$, where each row corresponds to $d$ features about individual $i \in [n]$. For notational purposes, we let $X_i$ denote the $i^{th}$ row of $X$, and $X_{(j)}$ denote the $j^{th}$ column of $X$. Following this convention, we let $X_I$ denote the row-wise submatrix of $X$ that contains data on the individuals in $I$. For ease of exposition, we will use the terms matrices, datasets, and databases interchangeably. 

Equipped with $X_I$ with $y_I$, the program official constructs a machine learning model $M$, which can then be used to generate \textit{predicted targeting variables} $\hat{y}$. In practice, a value of $\hat{y}$ is generated for every individual in $[n]$, even if their actual targeting variable is known to the program official. The rationale is that this provides a level of fairness in the sense that the eligibility of all individuals is determined by the same targeting process.

For our study, we consider the setting where the data holder -- cognizant of privacy concerns -- does not share a dataset $X$ with the program official.\footnote{In the setting we consider, the program official does not share $y_I$ with the data holder. In situations where the data holder can share $y_I$ as part of a secure computing protocol, it is possible to utilize multiple privacy-enhancing technologies to generate $\hat{y}$. For example, one possible design solution would be to create a differentially private machine learning model $M$ using secure multiparty computation, and then predict each individual's targeting variable using $X$.} Instead, the data holder provides the program official with a \textit{provably private} dataset $X_{priv}$, which will be used in place of $X$ in the learning and prediction process described above. The dataset $X_{priv}$, which we will refer to as a \textit{privatized dataset}, is the output of some privacy-enhancing technology $A$. 

Our study seeks to understand the properties $A$ must have to ensure $X_{priv}$ (1) yields accurate predictions to enable effective programmatic outcomes, while (2) providing strong privacy guarantees, as well as characterize the limits and tradeoffs induced by $A$. 

\subsection{Targeted differential privacy}
\label{app:math_tdp}


To describe targeted differential privacy, we will utilize the following notation and terminology. Let $||\cdot||_2$ denote the $L_2$-norm, $\L_2^{d}$ denote the the set of $d$-dimensional real vectors with $L_2$ at most 1, and $\L_2^{n \times d}$ denote the set of of real $n \times d$ matrices where the $L_2$ norm of each row is at most 1. 

We say two databases $X, X' \in \L_2^{n\times d}$ are \textit{classic neighbors} if they agree on exactly $n-1$ rows. Classic differential privacy ensures that an algorithm's behavior is statistically indistinguishable for all pairs of classically neighboring datasets \cite{dwork2006our}.

\begin{defn}
\label{defn: cdp}
    A randomized algorithm $A$ from $\L_2^{n\times d}$ to $\mathbb{O}$ satisfies $(\epsilon,\delta)$-\textit{classic differential privacy} if for all classic neighbors $X,X' \in \L_2^{n\times d}$ and for all measurable sets $E \in \mathbb{O}$, 
    $$\P(A(X) \in E) \le e^{\epsilon}\P(A(X') \in E) + \delta$$ 
For brevity, at times we will omit the term ``classic'' and simply refer to this condition as \textit{differential privacy} (DP).
\end{defn}

For machine learning applications, the specific values in the row of a dataset can affect the predictions made. As such, it will be helpful to consider a more granular notion of neighboring datasets that quantities the extent to which two rows disagree. 

We say  $X, X' \in \L_2^{n\times d}$ are \textit{$B$-neighbors} if they are classic neighbors, and for the single row $i$ where they disagree, $||X_i - X'_i||_2 \le B$. When $B = 2$, $B$-neighbors coincides with classic neighbors. As we will see, $B$ serves as an analytical device, enabling a finer analysis of privacy loss compared to classic differential privacy.

We define the cumulative distance between databases $X$ and $X'$, both of size $n$, as the sum of their row-wise distances. That is, if $X$ and $X'$ have $n$ rows, their cumulative distance is $d(X,X') = \sum_{i \in [n]} ||X_i - X'_i||_2$. By definition, if $X$ and $X'$ are $B$-neighbors, then $d(X,X') \le B$. 

Targeted differential privacy is a variant of classic differential privacy that requires an algorithm's behavior satisfy statistical indistinguishability for all pairs of $B$-neighboring datasets.

\begin{defn}
\label{defn: tdp}
    A randomized algorithm $A$ from $\L_2^{n\times d}$ to $\mathbb{O}$ satisfies $(B, \epsilon,\delta)$-\textit{targeted differential privacy}\footnote{The definition of TDP has appeared in prior privacy studies in different contexts. For additional information, see \textit{Related Contextual Adaptations} in the \textit{Methods} section of the main manuscript.} (TDP) if for all $B$-neighbors $X,X' \in \L_2^{n\times d}$ and for all measurable sets $E \in \mathbb{O}$, 
    $$\P(A(X) \in E) \le e^{\epsilon}\P(A(X') \in E) + \delta$$
By definition, when $B = 2$ we recover classic differential privacy. 
\end{defn}

The proofs of post-processing, sequential composition, and parallel composition are nearly identical to those presented in Dwork and Roth \cite{dwork2014algorithmic} and McSherry \cite{mcsherry2009privacy}, so we omit them.

\begin{lem}
\label{lem:postprocess}
    [Post-Processing] Suppose $A$ from $\L_2^{n\times d}$ to $\O$ satisfies $(B,\epsilon, \delta)$-TDP. Then for any (potentially randomized) function $F$ from $\O$ to some space $\O'$, $F \circ A$ also satisfies $(B,\epsilon, \delta)$-TDP.
\end{lem}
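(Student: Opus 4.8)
The plan is to mirror the classical post-processing proof and exploit the fact that the only structural difference between $(B,\epsilon,\delta)$-TDP and $(\epsilon,\delta)$-classic DP is the family of dataset pairs over which the indistinguishability inequality is required (namely $B$-neighbors in place of classic neighbors). Since post-processing acts solely on the output of $A$ and never touches its input datasets, the neighboring relation plays no role in the argument and the standard proof transfers verbatim. Accordingly, I would fix an arbitrary pair of $B$-neighbors $X, X' \in \L_2^{n\times d}$ and an arbitrary measurable event $E' \in \O'$, and establish $\P((F \circ A)(X) \in E') \le e^{\epsilon}\,\P((F \circ A)(X') \in E') + \delta$.

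First I would dispatch the case where $F$ is deterministic. The key step is to pass to the preimage: set $E = F^{-1}(E') = \{o \in \O : F(o) \in E'\}$, which is measurable by measurability of $F$. Then $\{(F \circ A)(X) \in E'\} = \{A(X) \in E\}$, so applying the $(B,\epsilon,\delta)$-TDP guarantee of $A$ to the event $E$ gives
\[
\P((F\circ A)(X) \in E') = \P(A(X) \in E) \le e^{\epsilon}\,\P(A(X') \in E) + \delta = e^{\epsilon}\,\P((F\circ A)(X') \in E') + \delta.
\]

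Next I would extend to randomized $F$ by making its internal randomness explicit. Writing $F(o) = f_r(o)$, where $r$ is a random seed drawn from a distribution independent of the randomness of $A$ and each $f_r$ is a deterministic measurable map, I would condition on $r$, apply the deterministic bound pointwise in $r$, and integrate. Concretely,
\[
\P((F\circ A)(X) \in E') = \E_r\bigl[\P(f_r(A(X)) \in E')\bigr] \le \E_r\bigl[e^{\epsilon}\,\P(f_r(A(X')) \in E') + \delta\bigr] = e^{\epsilon}\,\P((F\circ A)(X') \in E') + \delta,
\]
where the independence of $r$ from $A$ justifies both the conditioning and the interchange of the expectation with the probability, and the middle inequality is the deterministic case applied to each fixed $f_r$.

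The proof presents essentially no obstacle, precisely because nothing about post-processing interacts with the notion of $B$-neighbors. The only point requiring care is the measure-theoretic bookkeeping in the randomized case: ensuring the seed $r$ is independent of $A$ so that the conditional expectations factor, and that each preimage $f_r^{-1}(E')$ is measurable. These are entirely standard, which is exactly why the paper defers to Dwork--Roth and McSherry for the details.
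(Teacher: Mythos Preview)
Your proof is correct and follows exactly the standard Dwork--Roth post-processing argument that the paper defers to, with the only change being that the fixed pair $X,X'$ is taken to be $B$-neighbors rather than classic neighbors. Since the paper explicitly omits this proof as ``nearly identical'' to the classical one, your write-up is precisely what is intended.
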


\begin{lem}
\label{lem:seqcom}
    [Sequential Composition] Suppose randomized algorithms $A_j$ mapping $\L_2^{n\times d}$ to $\O_j$ satisfies $(B_j,\epsilon_j, \delta_j)$-TDP for $j \in [m]$. Then the sequential composition of these mechanisms, $(A_1,...,A_m)$ from $\L_2^{n\times d}$ to $\prod_{j \in [m]} O_j$ satisfies $(\min_{j \in [m]}B_j, \sum_{j \in [m]} \epsilon_j, \sum_{j \in [m]} \delta_j)-TDP$
\end{lem}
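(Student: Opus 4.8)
The plan is to reduce the statement to the classical $(\epsilon,\delta)$-composition theorem of Dwork--Roth and McSherry by exploiting the fact that the family of $B$-neighbor relations is \emph{nested} in $B$. Write $B^\star = \min_{j \in [m]} B_j$, $\epsilon^\star = \sum_{j\in[m]}\epsilon_j$, and $\delta^\star = \sum_{j\in[m]}\delta_j$. The only feature distinguishing TDP composition from classical DP composition is the bookkeeping of the neighbor parameter, so the crux is to argue that every mechanism in the family certifies its guarantee on the \emph{same} neighbor relation, namely that of $B^\star$-neighbors.

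First I would record a monotonicity observation. If $B' \le B$, then any pair of $B'$-neighbors $X, X'$ is automatically a pair of $B$-neighbors: they are classic neighbors by definition, and the single differing row $i$ satisfies $\|X_i - X_i'\|_2 \le B' \le B$. Consequently $(B,\epsilon,\delta)$-TDP implies $(B',\epsilon,\delta)$-TDP for every $B' \le B$, since the defining inequality is merely required over a smaller collection of dataset pairs. Applying this to each $A_j$ with $B' = B^\star \le B_j$, every $A_j$ satisfies $(B^\star, \epsilon_j, \delta_j)$-TDP. (Equivalently, the set of pairs protected by \emph{all} mechanisms simultaneously is exactly the intersection of the $B_j$-neighbor relations, which is the $B^\star$-neighbor relation; this is precisely why the composed parameter is the minimum rather than the maximum or the sum.)

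With all mechanisms now certified against the common $B^\star$-neighbor relation, I would fix an arbitrary pair of $B^\star$-neighbors $X, X'$. On this fixed pair each $A_j$ provides exactly the $(\epsilon_j,\delta_j)$-indistinguishability of classic DP, the $B^\star$-neighbor relation is symmetric so the guarantee also holds with the roles of $X$ and $X'$ reversed, and the internal randomness of the $A_j$ is independent across $j$. I would then invoke the classical sequential-composition argument verbatim: working with the product output distributions on $\prod_{j\in[m]}\O_j$, the standard ``bad event'' / likelihood-ratio decomposition shows that for every measurable $E$, $\P((A_1(X),\dots,A_m(X)) \in E) \le e^{\epsilon^\star}\,\P((A_1(X'),\dots,A_m(X')) \in E) + \delta^\star$. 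Since $X, X'$ was an arbitrary $B^\star$-neighbor pair, the composed mechanism satisfies $(B^\star, \epsilon^\star, \delta^\star)$-TDP, which is the claim.

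The main obstacle is conceptual rather than computational: recognizing that the correct composed neighbor parameter is the minimum $B^\star = \min_j B_j$, and justifying this through the nesting of neighbor relations so that one can legitimately restrict attention to a single $B^\star$-neighbor pair. Once that reduction is in place, no new analysis is needed --- the remaining steps coincide with the classical proof, which is exactly why the paper omits the details (cf.\ Definition \ref{defn: tdp}). The only point requiring minor care is ensuring the two-sided inequality needed to match Definition \ref{defn: tdp}, but this follows immediately from the symmetry of the $B$-neighbor relation.
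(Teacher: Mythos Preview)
Your proposal is correct and matches the paper's approach: the paper explicitly omits the proof, stating only that it is ``nearly identical'' to the classical composition results of Dwork--Roth and McSherry, which is precisely the reduction you carry out after the monotonicity observation that $(B,\epsilon,\delta)$-TDP implies $(B',\epsilon,\delta)$-TDP whenever $B'\le B$. The bookkeeping that the composed neighbor parameter must be $\min_j B_j$ is exactly right and is the only nontrivial adaptation needed.
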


\begin{lem}
\label{lem:parcom}
    [Parallel Composition] Suppose $A_j$ from $\L_2^{n\times d}$ to $\O_j$ satisfies $(B_j,\epsilon_j, \delta_j)$-TDP. If these algorithms are computed on disjoint sets of data, the composite mechanism $(A_1,...,A_m)$ from $\L_2^{n\times d}$ to $\prod_{j \in [m]} \O_j$ satisfies $(\min_{j \in [m]}B_j, \max_{j \in [m]} \epsilon_j, \max_{j \in [m]} \delta_j)$-TDP.
\end{lem}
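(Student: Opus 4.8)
The plan is to mirror the classical parallel-composition argument of McSherry, adapting it to $B$-neighbors and carrying the additive $\delta$ term through a Fubini/slicing step. First I would fix notation: since the $A_j$ are ``computed on disjoint sets of data,'' I take a partition of the row indices $[n]$ into disjoint blocks $S_1,\dots,S_m$, where $A_j$ depends on its input only through the rows indexed by $S_j$, and I assume the mechanisms draw independent randomness. Fix an arbitrary pair of $(\min_{j}B_j)$-neighbors $X, X'$ and an arbitrary measurable event $E \subseteq \prod_{j}\O_j$. Because $X$ and $X'$ agree on all but one row $i$, that row lies in exactly one block $S_{j^\star}$; hence the restrictions agree, $X_{S_j}=X'_{S_j}$, for every $j \ne j^\star$, while on block $j^\star$ they differ in a single row whose $L_2$-distance is at most $\min_j B_j \le B_{j^\star}$. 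Thus $X_{S_{j^\star}}$ and $X'_{S_{j^\star}}$ are $B_{j^\star}$-neighbors, so the $(B_{j^\star},\epsilon_{j^\star},\delta_{j^\star})$-TDP guarantee of $A_{j^\star}$ is applicable --- this is precisely where the choice $\min_j B_j$ is needed, since $j^\star$ is not known in advance.

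Next, I would exploit the product structure. By independence, the joint law of $(A_1,\dots,A_m)$ on input $X$ factorizes as $\mu=\prod_j \mu_j$, with $\mu_j$ the law of $A_j(X_{S_j})$, and similarly $\mu'=\prod_j \mu'_j$ on input $X'$; moreover $\mu_j=\mu'_j$ for $j\ne j^\star$. For a fixed tuple $y_{-j^\star}=(y_j)_{j\ne j^\star}$, define the slice $E_{y_{-j^\star}}=\{\,t\in\O_{j^\star} : (t,y_{-j^\star})\in E\,\}$. By Fubini's theorem, $\P((A_1,\dots,A_m)(X)\in E)=\int \mu_{j^\star}(E_{y_{-j^\star}})\,\prod_{j\ne j^\star}\mu_j(dy_j)$. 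For each fixed $y_{-j^\star}$ the slice is a measurable subset of $\O_{j^\star}$, so the single-mechanism guarantee gives $\mu_{j^\star}(E_{y_{-j^\star}}) \le e^{\epsilon_{j^\star}}\mu'_{j^\star}(E_{y_{-j^\star}})+\delta_{j^\star}$. Integrating this inequality against the common product measure $\prod_{j\ne j^\star}\mu_j$ and reassembling by Fubini yields $\P((A_1,\dots,A_m)(X)\in E) \le e^{\epsilon_{j^\star}}\P((A_1,\dots,A_m)(X')\in E)+\delta_{j^\star}$, since the $\delta_{j^\star}$ slack integrates against a probability measure of total mass $1$, and the leading term reassembles to $e^{\epsilon_{j^\star}}\mu'$ because the inactive marginals coincide. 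Bounding $\epsilon_{j^\star}\le \max_j\epsilon_j$ and $\delta_{j^\star}\le\max_j\delta_j$, and recalling that $X,X',E$ were arbitrary, completes the proof.

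The main obstacle is the middle step: unlike pure differential privacy, the $(\epsilon,\delta)$ formulation does not multiply cleanly across coordinates, so one must ensure that the additive slack $\delta_{j^\star}$ does not accumulate over the $m-1$ ``inactive'' coordinates. The slicing argument resolves this because the inactive marginals are literally identical on $X$ and $X'$ and integrate to total mass one, so they contribute a factor of $1$ rather than extra $\delta$; it is essential here that the mechanisms use independent randomness and that the blocks are disjoint, which together guarantee the clean product factorization. A minor edge case --- the differing row $i$ lying outside every block --- is handled trivially, since then all restrictions coincide and the privacy loss is zero.
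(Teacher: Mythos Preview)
Your proposal is correct and is precisely the McSherry-style parallel-composition argument adapted to $B$-neighbors with the additive $\delta$ term; the paper itself omits the proof, stating only that it is ``nearly identical'' to those in Dwork--Roth and McSherry, which is exactly the route you take. The key adaptation you identify---that the differing row lies in a single block $S_{j^\star}$ and that $\min_j B_j \le B_{j^\star}$ is needed to invoke $A_{j^\star}$'s guarantee---is the only nontrivial change from the classical proof, and you handle it cleanly.
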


Privacy protection under TDP decays predictably for datasets $x$ and $y$ that are classic neighbors.

\begin{lem}
\label{lem:switch}
    [Switch Lemma] Suppose a randomized algorithm $A$ from  $\L_2^{n\times d}$ to $\mathbbm{O}$ satisfies $(B,\epsilon, \delta)$-TDP with $B, \epsilon >0$. Then for all classic neighboring datasets $X,\hat{X} \in \L_2^{n\times d}$, and for all measurable sets $E \in \O$,

    $$
    \P(A(X) \in E) \le e^{s\epsilon}\P(A(\hat{X}) \in E) + \frac{e^{s\epsilon}-1}{e^{\epsilon}-1} \delta
    $$
    where $s = \ceil{d(X,\hat{X})B^{-1}}$
\end{lem}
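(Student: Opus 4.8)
The plan is to prove the Switch Lemma by a \emph{hybrid (group privacy) argument}: since $X$ and $\hat X$ are classic neighbors, they differ in a single row $i$, and the only obstruction to applying the TDP guarantee directly is that the row change $\|X_i - \hat X_i\|_2$ may exceed $B$. Because $(B,\epsilon,\delta)$-TDP only controls pairs that differ in one row by at most $B$ in $L_2$ distance, I would bridge $X$ to $\hat X$ through a chain of intermediate datasets, each consecutive pair of which is a genuine $B$-neighbor, and then accumulate the per-step privacy loss.

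Concretely, first I would note that, since all rows but $i$ agree, $d(X,\hat X) = \|X_i - \hat X_i\|_2$, and set $s = \ceil{d(X,\hat X)B^{-1}}$. I would then construct the interpolating sequence $Z_0, Z_1, \dots, Z_s$ where $Z_t$ agrees with $X$ (equivalently $\hat X$) on every row except row $i$, and whose $i$-th row is the convex combination $(Z_t)_i = (1 - t/s)X_i + (t/s)\hat X_i$, so that $Z_0 = X$ and $Z_s = \hat X$. Two validity checks are needed here, and this is the one place where care is required: (i) each $Z_t$ must lie in $\L_2^{n\times d}$, which follows because $X_i, \hat X_i$ are in the unit $L_2$ ball and the ball is convex, so every convex combination again has norm at most $1$; and (ii) each consecutive pair $Z_t, Z_{t+1}$ must be a $B$-neighbor, which holds because they differ only in row $i$ with $\|(Z_{t+1})_i - (Z_t)_i\|_2 = s^{-1}\|\hat X_i - X_i\|_2 = d(X,\hat X)/s \le B$, the last inequality being exactly the defining property of the ceiling $s \ge d(X,\hat X)B^{-1}$.

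With the chain in place, the remainder is a routine unrolling. Writing $p_t = \P(A(Z_t)\in E)$, applying $(B,\epsilon,\delta)$-TDP to each $B$-neighboring pair gives $p_t \le e^{\epsilon}p_{t+1} + \delta$ for $t = 0,\dots,s-1$. Iterating this recursion $s$ times yields $p_0 \le e^{s\epsilon}p_s + \delta\sum_{j=0}^{s-1}e^{j\epsilon}$, and evaluating the geometric sum (legitimate since $\epsilon > 0$ forces $e^{\epsilon}\neq 1$) produces the coefficient $\tfrac{e^{s\epsilon}-1}{e^{\epsilon}-1}$, which is precisely the claimed bound on $\P(A(X)\in E)$ in terms of $\P(A(\hat X)\in E)$. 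I would close by noting the degenerate case $X=\hat X$, where $d(X,\hat X)=0$, $s=0$, the empty sum vanishes, and the inequality reduces to the trivial $p_0 \le p_0$. The main (and only genuine) obstacle is the intermediate-validity step (ii)/(i) above; once the interpolation is shown to stay inside $\L_2^{n\times d}$ and to take steps of length at most $B$, the telescoping inequality and geometric series are standard.
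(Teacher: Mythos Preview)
Your proposal is correct and follows essentially the same hybrid (group-privacy) argument as the paper: both proofs interpolate the single differing row along the line segment between $X_i$ and $\hat X_i$, invoke convexity of the $L_2$ unit ball to keep intermediate datasets in $\L_2^{n\times d}$, and then telescope the $(B,\epsilon,\delta)$-TDP bound across $s=\ceil{d(X,\hat X)B^{-1}}$ steps before summing the resulting geometric series. Your version is slightly more explicit in specifying the interpolation points as uniform convex combinations, whereas the paper leaves the intermediate points abstract; the degenerate case $X=\hat X$ you flag cannot actually occur since classic neighbors by definition disagree on exactly one row.
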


\begin{proof}
    Consider an arbitrary measurable set $E \subseteq \O$. If $d(X,\hat{X}) \le B$, the claim follows immediately. So consider $d(X,\hat{X}) > B$. Since $X$ and $\hat{X}$ are classic neighbors, they differ in one element: denote this as the the $t^{th}$ element. Then $||X_t - \hat{X}_t||_2 = d(X,\hat{X})$. 

    Next, we will relate $X$ and $\hat{X}$ to an sequence of $B$-neighboring datasets; to do so, we will first construct a preliminary sequence of datapoints starting at $X_t$ and ending at $\hat{X}_t$ whose adjacent values in the sequence are at most distance $B$ apart. To motivate the construction used in the upcoming paragraph, consider the line segment in $\L_2^d$ whose endpoints are $X_t$ and $\hat{X}_t$. This line segment has length $d(X,\hat{X})$. We will break this line segment into $m+1$ segments that are each at most length $B$. Since $\L_2^d$ is convex, these breakpoints must reside in $\L_2^d$, and will be used these as values in our preliminary sequence. 
    
    This is done mathematically by letting $m$ be the smallest non-negative integer such that $d(X,\hat{X}) - mB \le B$. Then $m$ is the smallest non-negative integer such that $m \ge d(X,\hat{X})B^{-1}-1$. So $m = \ceil{d(X,\hat{X})B^{-1}-1}$, which implies $m+1 = \ceil{d(X,\hat{X})B^{-1}}$. Consider a sequence $X_t^{(0)},X_t^{(1)},...,X_t^{(m+1)} \in \L_2^{d}$, defined as follows: $X_t^{(0)} = X_t$, $X_t^{(m+1)} = \hat{X}_t$, and the remaining $X_t^{(i)}$ are any values such that $||X_t^{(i)} - X_t^{(i+1)}||_2 \le B$ for $i \in \{0,...,m\}$, provided they are consistent with the endpoints $X_t^{(0)}$ and $X_t^{(m+1)}$.
    
    Now, define $X^{(i)}$ as the dataset $X$ with $X_t$ replaced with $X_t^{(i)}$. This constructs the sequence of datasets $X^{(0)}, X^{(1)}, ..., X^{(m+1)}$. By construction, all of these datasets are in $\L_2^{n \times d}$; also, $X = X^{(0)}$ and $\hat{X} = X^{(m+1)}$. 
    
    Write $p_i = \P(A(X^{(i)}) \in E)$. By construction, every adjacent pair of datasets $X^{(i)}$ and $X^{(i+1)}$ in our sequence are $B$-neighbors, so
    \begin{align*}
        p_0 &\le e^{\epsilon}p_1 + \delta \\
        &\le e^{\epsilon}(e^{\epsilon}p_2 + \delta) + \delta \\
        &= e^{2\epsilon}p_2 + (e^{\epsilon}+1)\delta \\
        & ... \\
        & \le e^{(m+1)\epsilon}p_{m+1} + (e^{m\epsilon}+...+e^{\epsilon}+1)\delta \\
        &= e^{(m+1)\epsilon}p_{m+1} + \frac{e^{(m+1)\epsilon}-1}{e^{\epsilon}-1}\delta
    \end{align*}
    Since $m+1 = s$, $p_0 = \P(A(X) \in E)$, and $p_{m+1} = \P(A(\hat{X}) \in E)$, the claim is proven.
\end{proof}

The switch lemma provides a bridge between targeted and classical differential privacy. Namely, if $A$ satisfies $(B,\epsilon, \delta)$-TDP then $A$ also satisfies $(\epsilon',\delta')$-DP, where $\epsilon' = s\epsilon$, $\delta' = \min\left\{1,\frac{e^{s\epsilon}-1}{e^{\epsilon}-1}\delta\right\}$, and $s = \ceil{2B^{-1}}$. However, the converse is not necessarily true: TDP requires the $(\epsilon,\delta)$ probability bound holds for every diameter $B$ ball; however, $(\epsilon',\delta')$-DP does not necessarily imply the $(\epsilon,\delta)$ probability bound holds in every diameter $B$ ball. As such, there is a key semantic distinction between classic and targeted differential privacy: while TDP guarantees an adversary cannot discern (up to the factors $\epsilon$ and $\delta$) between two individuals' datapoints that are $B$-similar, under $(\epsilon',\delta')$-DP an adversary may be able to do so.

\subsection{Formal analysis of the necessary conditions for accurate targeting}
\label{app:math_necessary}

In this section, we formalize the relationship between the privacy parameters $(B,\epsilon, \delta)$ and the the goal of accurate targeting. We consider the binary setting, where an individual is classified either as eligible for a benefit ($y=1$) or not ($y = 0$). Denote the set $\{0,1\}$ as $\Z_2$.

Let $T:\L_2^{n \times d} \rightarrow \Z_2$ denote the targeting process for an arbitrary individual. In the setting we consider in the main manuscript, $T(X)$ is the result of running a learning algorithm on $(X_I,y_I)$, outputting the machine learning model, and then using this model to determine an individual's eligibility status. When $A(X)$ outputs the privatized dataset $X_{priv}$ that is used in place of $X$ for targeting, the resulting eligibility classification is given by $T(A(X))$. 

We restrict our analysis to targeting processes $T$ that are \textit{minimally responsive}: these are functions for which there exist classic neighboring datasets $X$ and $X'$ such that $T(X) \ne T(X')$. Note that if $T$ is not minimally responsive, then $T$ outputs the same  predictions regardless of an individual's data. Such targeting processes are not useful in practice, and hence we omit them from our analysis.

To enable accurate targeting, we need the mechanism $T \circ A$ to behave similarly to $T$ with high probability. Formally, we say $A$ is \textit{$\gamma$-accurate} for $T$ if for all $X \in \L_2^{n \times d}$, we have $\P(T(A(X)) = T(X)) \ge \gamma$.


We now characterize the necessary conditions for accurate targeting under minimally responsive targeting processes.

\begin{thm}
\label{thm:packing_classification}
    Let $A$ be a randomized algorithm from $\L_2^{n \times d}$ to $\L_2^{n \times d}$ satisfying $(B,\epsilon,\delta)$-TDP with $B, \epsilon > 0$. Suppose that $T: \L_2^{n \times d} \rightarrow \Z_2$ is a minimally responsive deterministic function. If $A$ is $\gamma$-accurate for $T$ with $\gamma \in \left[\frac{1}{2},1\right)$, then 
    $$ 
    \ceil{2B^{-1}} \ge \ceil{\epsilon^{-1}\ln(Q)}
    $$
    where 
    $$
    Q = \frac{\delta + \gamma(e^{\epsilon}-1)}{\delta + (1-\gamma)(e^{\epsilon}-1)}
    $$ 
    In particular, when $2B^{-1} \in \Z$ and $\gamma > \frac{1}{2}$, the inequality simplifies to 
    $$ 
    B \le \frac{2}{\ceil{\epsilon^{-1}\ln(Q)}}
    $$
\end{thm}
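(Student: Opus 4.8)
The plan is to reduce the whole statement to a single application of the Switch Lemma (Lemma~\ref{lem:switch}) on one carefully chosen pair of classic neighbors, converting the accuracy hypothesis into a lower bound on the exponent $s$ that the Switch Lemma introduces. First I would use minimal responsiveness: by hypothesis there exist classic neighbors $X,X' \in \L_2^{n\times d}$ with $T(X)\ne T(X')$, and after relabeling I may assume $T(X)=1$ and $T(X')=0$. Consider the (measurable) set $E = \{Z\in\L_2^{n\times d} : T(Z)=1\}$. Then $\gamma$-accuracy at $X$ gives $\P(A(X)\in E)=\P(T(A(X))=1)\ge \gamma$, and $\gamma$-accuracy at $X'$ gives $\P(T(A(X'))=0)\ge\gamma$, hence $\P(A(X')\in E)\le 1-\gamma$. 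This is the only place the two-sided accuracy guarantee is used.

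Next I would apply the Switch Lemma to $X,X'$ with the event $E$. Writing $s=\ceil{d(X,X')B^{-1}}$, and noting that each row lies in $\L_2^{d}$ so the single differing row satisfies $d(X,X')\le 2$, monotonicity of the ceiling gives $s\le\ceil{2B^{-1}}$. The lemma yields $\P(A(X)\in E)\le e^{s\epsilon}\P(A(X')\in E)+\tfrac{e^{s\epsilon}-1}{e^{\epsilon}-1}\delta$. Substituting the two probability bounds produces the scalar inequality $\gamma \le e^{s\epsilon}(1-\gamma)+\tfrac{e^{s\epsilon}-1}{e^{\epsilon}-1}\delta$.

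The remainder is algebra. Setting $u=e^{s\epsilon}$ and collecting the terms linear in $u$ rearranges this to $u\ge \frac{\gamma(e^{\epsilon}-1)+\delta}{(1-\gamma)(e^{\epsilon}-1)+\delta}=Q$; the denominator is strictly positive since $\gamma<1$ and $\epsilon>0$, and $Q\ge 1$ since $\gamma\ge\tfrac{1}{2}$. Taking logarithms gives $s\ge \epsilon^{-1}\ln Q$, and since $s\in\Z$ this improves to $s\ge\ceil{\epsilon^{-1}\ln Q}$. Chaining with $s\le\ceil{2B^{-1}}$ yields $\ceil{2B^{-1}}\ge\ceil{\epsilon^{-1}\ln Q}$, the main inequality. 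For the special case, when $2B^{-1}\in\Z$ we have $\ceil{2B^{-1}}=2B^{-1}$, and when $\gamma>\tfrac{1}{2}$ we have $Q>1$, so $\ceil{\epsilon^{-1}\ln Q}$ is a positive integer; rearranging $2B^{-1}\ge\ceil{\epsilon^{-1}\ln Q}$ then gives $B\le 2/\ceil{\epsilon^{-1}\ln Q}$.

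I do not expect a serious obstacle, since the argument is short once $E$ is chosen; the delicate points are bookkeeping rather than conceptual. The first is to verify that the rearrangement solves for $u$ in the correct direction: one must confirm the coefficient $(1-\gamma)+\delta/(e^{\epsilon}-1)$ multiplying $u$ is strictly positive, so that dividing does not reverse the inequality. The second is the measurability of $E$, needed to invoke the Switch Lemma; this is automatic under the implicit assumption that $T$ is measurable (already required to treat $T\circ A$ as a random variable). Finally, I would take care to use the ceiling-monotonicity bound $s\le\ceil{2B^{-1}}$ rather than assuming the witnessing neighbors $X,X'$ are maximally separated, since minimal responsiveness only guarantees existence of \emph{some} disagreeing pair.
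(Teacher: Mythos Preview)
Your proposal is correct and essentially identical to the paper's proof: both pick disagreeing classic neighbors from minimal responsiveness, invoke the Switch Lemma to obtain $\gamma \le e^{s\epsilon}(1-\gamma)+\tfrac{e^{s\epsilon}-1}{e^{\epsilon}-1}\delta$, solve for $s$, and chain with $s\le\ceil{2B^{-1}}$. The only cosmetic difference is that the paper first applies post-processing (Lemma~\ref{lem:postprocess}) to pass to $T\circ A$ and then invokes the Switch Lemma with the event $\{T(X)\}\subseteq\Z_2$, which sidesteps the measurability question you flag by working in a finite output space.
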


\begin{proof}    
    By the minimal responsiveness of $T$, there exists classic neighboring datasets $X$ and $X'$ in $\L_2^{n\times d}$ such that $T(X) \ne T(X')$. Since $A$ satisfies $(B,\epsilon,\delta)$-TDP, so too does $T\circ A$ by the post-processing lemma (Lemma \ref{lem:postprocess}). By the switch lemma (Lemma \ref{lem:switch}), we have 
    $$
    \P(T(A(X)) \in E) \le e^{s\epsilon}\P(T(A(X')) \in E) +\frac{e^{s\epsilon}-1}{e^{\epsilon}-1} \delta 
    $$
    for all sets $E \subseteq \Z_2$, where $s = \ceil{d(X,X')B^{-1}}$. In particular, consider the set $E = \{T(X)\}$. 
    
    By the accuracy condition, $\P(T(A(X)) \in E) \ge \gamma$. Since $E^c = \{T(X')\}$, the accuracy condition also implies $\P(T(A(X')) \in E^c) \ge \gamma $, so  $\P(T(A(X')) \in E) = 1 - \P(T(A(X')) \in E^c) \le 1- \gamma$. Combining this with the above inequality, we have
    $$
    \gamma \le e^{s\epsilon}(1-\gamma)+\frac{e^{s\epsilon}-1}{e^{\epsilon}-1} \delta 
    $$

        Rearranging terms and solving for $s$ yields $s \ge \epsilon^{-1}\ln(Q)$. Note that the quantity on the right-hand side of the inequality is well-defined, as $\gamma \ge \frac{1}{2}$ implies $Q \ge 1$. Next, we deduce that $s \ge \ceil{\epsilon^{-1}\ln(Q)}$. There are two cases to consider.

    \begin{itemize}
        \item[$\star$] \textit{Case 1}: If $\epsilon^{-1}\ln(Q) \in \Z$, then $\epsilon^{-1}\ln(Q) = \ceil{\epsilon^{-1}\ln(Q)}$, so $s \ge \ceil{\epsilon^{-1}\ln(Q)}$ holds. 
        \item[$\star$] \textit{Case 2}: Otherwise, if  $\epsilon^{-1}\ln(Q) \in \R \cap \Z^c$, then $s \ne \epsilon^{-1}\ln(Q)$ since the left-hand side is an integer and the right hand side is not. Hence we must have $s > \epsilon^{-1}\ln(Q)$, implying $s \ge \ceil{\epsilon^{-1}\ln(Q)}$.
    \end{itemize} 
     Therefore, in either case we have $s \ge \ceil{\epsilon^{-1}\ln(Q)}$. Since $s = \ceil{d(X,X')B^{-1}}$ and $d(X, X') \le 2$, we have  $s \le \ceil{2B^{-1}}$ because the ceiling function is monotonically non-decreasing. Therefore $\ceil{2B^{-1}} \ge \ceil{\epsilon^{-1}\ln(Q)}$ as claimed.
     
     Furthermore, when  $2B^{-1} \in \Z$, the inequality simplifies to $2B^{-1} \ge \ceil{\epsilon^{-1}\ln(Q)}$. Also, $\gamma > \frac{1}{2}$ implies $\ln(Q) > 0$, so $\ceil{\epsilon^{-1}\ln(Q)} \ge 1$. Solving for $B$ yields 
     $$
     B \le \frac{2}{\ceil{\epsilon^{-1}\ln(Q)}}
     $$  
\end{proof}

\subsection{Private projection algorithm details}
\label{app:math_algorithm_details}

Next, we describe an algorithm that satisfies $(B, \epsilon, \delta)$-TDP. Algorithm \ref{algo:edprp} is motivated by the Johnson-Lindenstrauss lemma \cite{chen2015johnson, nabil2017random}, and its differentially private variants \cite{kenthapadi2012privacy, blocki2012johnson, gondara2020differentially}. Algorithm \ref{algo:edprp} is based on Gondara and Wong's differentially private randomized projection method \cite{gondara2020differentially}, with three notable alterations (two of which adapt to the projection phase, and the remaining adapts the covariance projection stage). The first alteration is in Step 1 of the algorithm. Gondara and Wong sample the elements of $R$ from a Gaussian with mean $0$ and variance $k^{-1}$. Our algorithm instead samples values from $\{-1,0,1\}$ uniformly at random. This improves the practical efficiency of our algorithm by utilizing integer computations over floating point computations \cite{nabil2017random, achlioptas2001database, bingham2001random}. The second alteration occurs in Step 2 of the algorithm where the projection values are inversely scaled by $k$, which reduces the standard deviation of the noise used as $k$ increases in Step 3 to achieve TDP. The third alteration occurs in Step 4 of the algorithm. In their algorithm, Gondara and Wong construct a noisy covariance matrix $C_{priv}$ by summing together the covariance matrix $X^{T}X$ and a matrix whose $G$ whose values are drawn from a Gaussian distribution. Since the matrix $G$ may not be symmetric, their resulting $C_{priv}$ may not be symmetric. For this reason we follow the approach in Dwork et al. \cite{dwork2014analyze} and construct a perturbation matrix $G$ whose upper triangle values are sampled from a Gaussian distribution, and whose lower triangular values are copied from the upper triangle. By construction, $G$ is now symmetric. Since the sum of symmetric matrices is symmetric, this design alteration has the benefit of constructing a symmetric $C_{priv}$.



The remainder of this section is devoted to showing that Algorithm \ref{algo:edprp} satisfies $(B,\epsilon,\delta)$-TDP. We begin with the following lemma, which examines the sensitivity of the matrix multiplication by $R$ in Step 2, as measured by the Frobenius norm $||\cdot||_F$.

\begin{lem}
\label{lem:global_sens}
    Suppose $X,X'\in \mathbb{L}_2^{n \times d}$ are $B$-neighbors. For any matrix $R \in \R^{d \times k}$,
    $$
    ||k^{-1}XR - k^{-1}X'R||_{F}^{2} \le \frac{B^2}{k^2}\sum_{j=1}^k \sum_{p=1}^d R_{p,j}^2
    $$
\end{lem}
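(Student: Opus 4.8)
The plan is to exploit the fact that $B$-neighbors differ in exactly one row, which collapses the Frobenius-norm computation to a single row and reduces the problem to a column-wise application of Cauchy--Schwarz.

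First I would set $D = X - X'$ and recall that, because $X$ and $X'$ are $B$-neighbors, they are in particular classic neighbors: they agree on $n-1$ rows and disagree only on some single row, say row $t$. Hence $D$ has every row equal to zero except row $t$, where $D_t = X_t - X_t'$, and the $B$-neighbor condition gives $\|D_t\|_2 \le B$. Since $\|k^{-1}XR - k^{-1}X'R\|_F^2 = k^{-2}\|DR\|_F^2$, it suffices to bound $\|DR\|_F^2$.

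Next I would observe that because $D$ is supported on the single row $t$, the product $DR$ is likewise supported on row $t$, with entries $(DR)_{t,j} = \sum_{p=1}^d D_{t,p} R_{p,j}$. Therefore $\|DR\|_F^2 = \sum_{j=1}^k \bigl(\sum_{p=1}^d D_{t,p} R_{p,j}\bigr)^2$. Applying the Cauchy--Schwarz inequality to each inner sum over $p$ yields
$$
\Bigl(\sum_{p=1}^d D_{t,p} R_{p,j}\Bigr)^2 \le \Bigl(\sum_{p=1}^d D_{t,p}^2\Bigr)\Bigl(\sum_{p=1}^d R_{p,j}^2\Bigr) = \|D_t\|_2^2 \sum_{p=1}^d R_{p,j}^2.
$$
Summing over $j$ and using $\|D_t\|_2^2 \le B^2$ gives $\|DR\|_F^2 \le B^2 \sum_{j=1}^k \sum_{p=1}^d R_{p,j}^2$, and dividing by $k^2$ delivers exactly the claimed bound.

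There is no serious obstacle here; the argument is entirely elementary. The only step requiring any care is the first one: correctly identifying that the difference matrix $D$ has a single nonzero row, so that the Frobenius norm of $DR$ reduces to a sum over the $k$ columns of a single row vector's image. Once that structural reduction is in place, the result is an immediate consequence of Cauchy--Schwarz applied column by column together with the defining inequality $\|X_t - X_t'\|_2 \le B$ of $B$-neighbors.
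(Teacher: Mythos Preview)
Your proposal is correct and follows essentially the same approach as the paper: reduce to the single differing row, apply Cauchy--Schwarz column by column to bound $|(X_t - X_t')R_{(j)}|^2 \le \|X_t - X_t'\|_2^2 \|R_{(j)}\|_2^2$, invoke the $B$-neighbor bound, and divide by $k^2$. The only cosmetic difference is that you introduce the difference matrix $D$ explicitly, whereas the paper works directly with $X_i - X_i'$.
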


\begin{proof}
    Let $R \in \R^{d \times k}$ be given. Since $X$ and $X'$ are $B$-neighbors, there exists a unique row, say $i \in [n]$, such that $X_i - X'_i \ne 0$. Consider the $(p,j)$ entry of $XR - X'R$. This is given by $\mathbbm{1}(p = i)(X_p - X'_p)R_{(j)}$. So then
    \begin{align*}
        ||XR - X'R||_{F}^{2} & =  \sum_{j=1}^k |(X_i - X'_i)R_{(j)}|^2 \\
        & \le \sum_{j=1}^k ||X_i - X'_i||_2^2 ||R_{(j)}||_2^2 \\
        & = ||X_i - X'_i||_2^2 \sum_{j=1}^k  ||R_{(j)}||_2^2 \\
        & \le B^2 \sum_{j=1}^k  ||R_{(j)}||_2^2 \\
        & = B^2 \sum_{j=1}^k \sum_{p=1}^d R_{p,j}^2
    \end{align*}
    where the first inequality follows by the Cauchy-Schwartz inequality and the second follows by the definition of $B$-neighbors. Hence, $||XR - X'R||_{F}^{2} \le B^2 \sum_{j=1}^k \sum_{p=1}^d R_{p,j}^2$. Multiplying both sides of the inequality by $k^{-2}$ produces the claim.
\end{proof}


In particular, when $R \in \{-1, 0 ,1 \}^{d \times k}$, the inequality in Lemma \ref{lem:global_sens} reduces to $||k^{-1}XR - k^{-1}X'R||_{F} \le B\sqrt{d/k}$ in the worst-case. Rather than scale the noise in Step 3 of our algorithm using this worst-case bound,  we can reduce the amount of noise introduced by considering a probabilistic version of the Frobenius norm based on the randomness induced by $R$. That is, we can instead guarantee $||k^{-1}XR - k^{-1}X'R||_{F} \le \alpha$ except with some failure probability $\beta$, which we incorporate in our privacy analysis.


\begin{prop}
\label{prop:probablistic_global_sens}
    Suppose each element of $R \in \R^{d \times k}$ is sampled i.i.d from a categorical distribution over $\{-1,0,1\}$ as follows: for $p_0 \in (0,1)$,
    \begin{center}
        \begin{tabular}{ c|c|c|c } 
        z & $-1$ & $0$ & $1$ \\ 
        \hline
        $\P(z)$ & $\frac{1-p_0}{2}$ & $p_0$ & $\frac{1-p_0}{2}$ \\ 
        \end{tabular}
    \end{center}
    Then, $\P(||k^{-1}XR-k^{-1}X'R||_F \ge \alpha) \le \beta$ whenever 
    $$
    \alpha \ge \frac{B}{\sqrt{k}} \sqrt{d\ln((1-p_0)(e-1) + 1) -k^{-1}\ln(\beta)}
    $$
\end{prop}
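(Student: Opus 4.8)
The plan is to combine the deterministic sensitivity bound of Lemma~\ref{lem:global_sens} with a Chernoff tail estimate on a sum of Bernoulli variables. First I would invoke Lemma~\ref{lem:global_sens}, which for the sampled $R$ gives the almost-sure inequality $\|k^{-1}XR - k^{-1}X'R\|_F^2 \le \frac{B^2}{k^2}S$, where $S = \sum_{j=1}^{k}\sum_{p=1}^{d} R_{p,j}^2$. Since $R_{p,j} \in \{-1,0,1\}$, the square $R_{p,j}^2$ equals $1$ exactly when $R_{p,j} = \pm 1$ (probability $1-p_0$) and equals $0$ otherwise (probability $p_0$); hence each $R_{p,j}^2$ is a Bernoulli$(1-p_0)$ random variable, and by the i.i.d.\ assumption $S \sim \mathrm{Binomial}(dk,\,1-p_0)$.

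Next I would transfer the event on the Frobenius norm to an event on $S$. Because the bound $\|k^{-1}XR-k^{-1}X'R\|_F^2 \le \frac{B^2}{k^2}S$ holds surely, the event $\{\|k^{-1}XR - k^{-1}X'R\|_F \ge \alpha\}$ is contained in $\{S \ge k^2\alpha^2/B^2\}$, so it suffices to control $\P(S \ge k^2\alpha^2/B^2)$. For this I would apply a Chernoff bound with the specific choice $\lambda = 1$: for any threshold $t$, $\P(S \ge t) \le e^{-t}\,\E[e^{S}]$, and the moment generating function factorizes over the $dk$ independent entries as $\E[e^{S}] = (p_0 + (1-p_0)e)^{dk} = (1 + (1-p_0)(e-1))^{dk}$.

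The choice $\lambda = 1$ is precisely what produces the factor $(e-1)$, and hence the logarithm $\ln((1-p_0)(e-1)+1)$ appearing in the statement; this is the one genuinely load-bearing step. I would flag explicitly that $\lambda=1$ is not the optimal Chernoff exponent, so the bound is slightly loose, but it is the choice that yields the clean closed form claimed. Substituting $t = k^2\alpha^2/B^2$ and requiring $e^{-t}(1+(1-p_0)(e-1))^{dk} \le \beta$, then taking logarithms, gives $k^2\alpha^2/B^2 \ge dk\ln(1+(1-p_0)(e-1)) - \ln\beta$. Solving for $\alpha$ reproduces exactly $\alpha \ge \frac{B}{\sqrt{k}}\sqrt{d\ln((1-p_0)(e-1)+1) - k^{-1}\ln(\beta)}$, which completes the argument. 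There is no substantive obstacle beyond this bookkeeping; the only real judgment call is fixing $\lambda = 1$ rather than optimizing over $\lambda$, trading a marginally looser tail for an explicit, usable formula that Step~3 of Algorithm~\ref{algo:edprp} can calibrate against.
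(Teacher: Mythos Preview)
Your proposal is correct and follows essentially the same route as the paper: invoke Lemma~\ref{lem:global_sens}, observe that $S=\sum R_{p,j}^2$ is $\mathrm{Binomial}(dk,1-p_0)$, then bound $\P(S\ge t)$ via Markov applied to $e^{S}$ (your ``Chernoff with $\lambda=1$'' is exactly the paper's ``Markov bound'' on $\E[\exp(Y)]$), and solve the resulting inequality for $\alpha$. Your explicit remark that fixing $\lambda=1$ is what produces the $(e-1)$ factor is a nice clarification the paper leaves implicit.
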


\begin{proof}
    Suppose the $(p,j)$ entry of $R$ is chosen according to the distribution described in the proposition. Then $R_{p,j}^2$ is $0$ with probability $p_0$ and $1$ with probability $1-p_0$, so $R_{p,j}^2 \sim \text{Bernoulli}(1-p_0)$. Let $Y = \sum_{j=1}^k \sum_{p=1}^d R_{p,j}^2 $. Then $Y \sim \text{Binomial}(dk, 1-p_0)$.

    Next, observe that $\P(||k^{-1}XR-k^{-1}X'R||_F \ge \alpha) = \P(||k^{-1}XR-k^{-1}X'R||_F^2 \ge \alpha^2)$. By Lemma \ref{lem:global_sens}, if $||k^{-1}XR-k^{-1}X'R||_F^2 \ge \alpha^2$ then $B^2k^{-2}\sum_{j=1}^k \sum_{p=1}^d R_{p,j}^2 \ge \alpha^2$. Hence, 
    
    $$
    \P(||k^{-1}XR-k^{-1}X'R||_F^2 \ge \alpha^2) \le \P\Bigg(B^2k^{-2} \sum_{j=1}^k \sum_{p=1}^d R_{p,j}^2 \ge \alpha^2\Bigg) = \P(B^2 k^{-2}Y \ge \alpha^2)
    $$
    
    Therefore,    
    \begin{align*}
        \P(||k^{-1}XR-k^{-1}X'R||_F \ge \alpha) & \le \P(B^2 k^{-2}Y \ge \alpha^2) \\
        & = \P\Bigg(Y \ge \Big(\frac{\alpha k}{B}\Big)^2\Bigg) \\
        & \le \frac{\E[\exp{(Y)}]}{\exp{\Big(\Big(\frac{\alpha k}{B}\Big)^2\Big)}} \\ 
        & = \frac{((1-p_0)(e-1) + 1)^{dk}}{\exp{\Big(\Big(\frac{\alpha k}{B}\Big)^2\Big)}}
    \end{align*}
    where the second inequality follows by the Markov bound, and the last equality follows from the moment generating function of the binomial distribution. Plugging in the bound for $\alpha$ in the proposition yields the claim.

\end{proof}

Now, set $p_0 = \frac{1}{3}$ and $\beta = \delta/2$ in Proposition \ref{prop:probablistic_global_sens}. Then,

$$
\alpha \ge  \frac{B}{\sqrt{k}}\sqrt{d\ln((2/3)(e-1) + 1) -k^{-1}\ln(\delta/2)}
$$

With this probabilistic sensitivity bound, we use the following lemma from Kenthapadi et. al. \cite{kenthapadi2012privacy} to show that Steps 1-3 of Algorithm \ref{algo:edprp} satisfy $(B, \epsilon_1,\delta_1)$-TDP.

\begin{lem}
\label{lem:kenthapadi}
    [Lemma 1 from \cite{kenthapadi2012privacy}; Lemma 3 from \cite{gondara2020differentially}] The mechanism given by $M(X) = f(X) + G$ satisfies $(B, \epsilon, \delta)$-TDP if $\delta < \frac{1}{2}$ and $G \in \R^{n \times k}$ with each $G_{p,j} \sim_{i.i.d} N(0, \sigma^2)$, where
    $$
    \sigma^2 = 2\Delta_2(f)^2\frac{\ln(1/(2\delta)) + \epsilon}{\epsilon^2} 
    $$
    and $\Delta_2(f)$ is the global sensitivity of $f$.
\end{lem}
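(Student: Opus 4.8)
The statement is the Gaussian mechanism, specialized to $B$-neighbors, so my plan is to run the standard Gaussian-mechanism privacy-loss argument and observe that the neighbor relation enters \emph{only} through the global sensitivity. Concretely, I would first fix the meaning of $\Delta_2(f)$ as sensitivity taken over $B$-neighbors, i.e. $\Delta_2(f) = \sup \|f(X) - f(X')\|_F$ where the supremum ranges over all $B$-neighboring $X, X' \in \L_2^{n\times d}$. Every subsequent estimate uses only the bound $\|f(X)-f(X')\|_F \le \Delta_2(f)$, so the classic-DP proof transfers verbatim and the ``TDP'' content is captured entirely by this reinterpretation of the sensitivity. Since the perturbation $G$ is added entrywise, I would vectorize the $n \times k$ output, so that $\|\cdot\|_F$ becomes the $L_2$ norm on $\R^{nk}$ and $M$ is an isotropic Gaussian mechanism.

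Fix $B$-neighbors $X, X'$, write $v = f(X) - f(X')$ and $r = \|v\|_F \le \Delta_2(f)$. Because $G$ is isotropic, the privacy loss $L(o) = \ln \frac{p_X(o)}{p_{X'}(o)}$ reduces to a one-dimensional computation along $v$: expanding the two Gaussian densities gives, for $o \sim M(X)$, that $L \sim N\!\left(\frac{r^2}{2\sigma^2}, \frac{r^2}{\sigma^2}\right)$. I would then invoke the standard sufficient condition that $\P[L > \epsilon] \le \delta$ implies the $(\epsilon,\delta)$ indistinguishability inequality (split any event $E$ according to $\{L \le \epsilon\}$ and its complement, bounding $p_X \le e^\epsilon p_{X'}$ on the first piece and by $\P[L>\epsilon]$ on the second). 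It then remains only to control this tail.

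Writing $L = \mu_L + \sigma_L Z$ with $\mu_L = r^2/(2\sigma^2)$, $\sigma_L = r/\sigma$, and $Z \sim N(0,1)$, the threshold satisfies $\P[L>\epsilon] = \P\!\left[Z > \frac{\epsilon\sigma}{r} - \frac{r}{2\sigma}\right]$, and the argument of $\P[Z > \cdot]$ is decreasing in $r$; hence the tail is monotonically increasing in $r$ and it suffices to take the worst case $r = \Delta_2(f)$. Using the sub-Gaussian tail bound $\P[Z > c] \le \tfrac12 e^{-c^2/2}$ with $c = \sqrt{2\ln(1/(2\delta))}$ (well-defined and nonnegative since $\delta < 1/2$), it is enough to verify $\mu_L + \sigma_L c \le \epsilon$. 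Substituting the prescribed $\sigma^2 = 2\Delta_2(f)^2(\ln(1/(2\delta)) + \epsilon)/\epsilon^2$ and setting $A = \ln(1/(2\delta)) + \epsilon$ and $x = \epsilon/A$, this reduces to the elementary inequality $\tfrac{x}{4} + \sqrt{1-x} \le 1$.

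I expect the main obstacle to be precisely this last verification: the prescribed $\sigma$ is not the tight root of the quadratic $\mu_L + \sigma_L c = \epsilon$, so I would not try to match it but instead confirm it leaves slack. The inequality $\tfrac{x}{4} + \sqrt{1-x} \le 1$ holds because its left side equals $1$ at $x = 0$ and has derivative $\tfrac14 - \tfrac{1}{2\sqrt{1-x}} < 0$ on $[0,1)$, so it is decreasing; this is exactly where $\delta < 1/2$ is used, since it forces $A > \epsilon$ and hence $x \in (0,1)$. Assembling the pieces gives $\P[L > \epsilon] \le \delta$ for the worst-case $B$-neighbors, and the indistinguishability reduction (applied to each ordered pair, using symmetry of the $B$-neighbor relation) then yields $(B,\epsilon,\delta)$-TDP.
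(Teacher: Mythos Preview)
Your argument is correct. The paper itself does not prove this lemma: it is imported verbatim from Kenthapadi et al.\ (their Lemma~1) and Gondara--Wong (their Lemma~3), and the only adaptation in this paper is the reinterpretation of the sensitivity $\Delta_2(f)$ over $B$-neighbors rather than classic neighbors --- precisely the observation you make in your first paragraph. Your reconstruction tracks the original Kenthapadi et al.\ proof closely: the privacy-loss reduction to a one-dimensional Gaussian, the tail bound $\P[Z>c]\le \tfrac12 e^{-c^2/2}$ (which is valid for all $c\ge 0$, as one can check via the Mills ratio), and the algebraic verification that the prescribed $\sigma$ leaves slack. The elementary inequality $\tfrac{x}{4}+\sqrt{1-x}\le 1$ on $(0,1)$ that you isolate is exactly the step where the constant in the lemma is confirmed to be sufficient rather than tight, and your monotonicity argument for it is clean.
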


\begin{prop}
\label{prop:projection}
    Set  $p_0 = \frac{1}{3}$ and $\beta = \delta/2$ in the projection distribution of Proposition \ref{prop:probablistic_global_sens}. Fix a matrix $R$ generated according to this distribution. Then the mechanism given by $M(X) = XR + G$ satisfies $(B, \epsilon, \delta)$-TDP if $\delta < \frac{1}{2}$ and $G \in \R^{n \times k}$ with each $G_{p,j} \sim_{i.i.d} N(0, \sigma^2)$, where
    $$
    \sigma = \frac{B}{\sqrt{k}}\sqrt{d\ln((2/3)(e-1) + 1) -k^{-1}\ln(\delta/2)}
    \frac{\sqrt{2(\ln(1/\delta) + \epsilon)}}{\epsilon} 
    $$ 
\end{prop}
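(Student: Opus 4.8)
The plan is to treat the randomness of the mechanism as coming jointly from the projection matrix $R$ and the Gaussian noise $G$, and to prove the $(B,\epsilon,\delta)$ bound by conditioning on $R$ while splitting the budget $\delta$ into two equal halves: one half pays for the (small) chance that $R$ inflates the sensitivity of the projection, and the other half is the $\delta$ consumed by the Gaussian mechanism of Lemma~\ref{lem:kenthapadi}.

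First I would instantiate Proposition~\ref{prop:probablistic_global_sens} with $p_0=\tfrac13$ and $\beta=\delta/2$, producing the threshold
$$\alpha=\frac{B}{\sqrt k}\sqrt{d\ln((2/3)(e-1)+1)-k^{-1}\ln(\delta/2)}.$$
The key structural observation is that the bound of Lemma~\ref{lem:global_sens}, and hence the tail event estimated in Proposition~\ref{prop:probablistic_global_sens}, depends on $R$ \emph{only} through $Y=\sum_{p,j}R_{p,j}^2$ and not on the particular pair of $B$-neighbors (the Markov step in that proof bounds $\P(B^2k^{-2}Y\ge\alpha^2)$). Consequently the good event $\mathcal G=\{R:\tfrac{B}{k}\sqrt Y\le\alpha\}$ is a single, pair-independent event satisfying $\P_R(\mathcal G^{c})\le\delta/2$, and on $\mathcal G$ the \emph{global} sensitivity (over all $B$-neighbors) of the projection map $f_R(X)=k^{-1}XR$ from Steps 1--3 is at most $\alpha$. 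This is precisely what licenses the use of the global-sensitivity form of Lemma~\ref{lem:kenthapadi} conditionally on $R$.

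Next, for any fixed $R\in\mathcal G$ I would apply Lemma~\ref{lem:kenthapadi} to $f_R$ with $\Delta_2(f_R)\le\alpha$ and target parameters $(\epsilon,\delta/2)$. Its calibration $\sigma^2=2\alpha^2\bigl(\ln(1/(2\cdot\delta/2))+\epsilon\bigr)/\epsilon^2=2\alpha^2\bigl(\ln(1/\delta)+\epsilon\bigr)/\epsilon^2$ reproduces the stated $\sigma$ upon substituting $\alpha$, and the hypothesis $\delta<\tfrac12$ ensures $\delta/2<\tfrac12$ so that the lemma applies. This yields, for every $R\in\mathcal G$, every pair of $B$-neighbors $X,X'$, and every measurable $E$, the conditional inequality $\P_G(M(X)\in E\mid R)\le e^{\epsilon}\P_G(M(X')\in E\mid R)+\delta/2$.

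Finally I would integrate over $R$ by the law of total probability, splitting the integral over $\mathcal G$ and $\mathcal G^{c}$: on $\mathcal G$ I insert the conditional bound and then enlarge the region back to all of $\mathcal G\cup\mathcal G^{c}$ inside the $e^{\epsilon}$ term (using nonnegativity) to recover the unconditional $\P_{R,G}(M(X')\in E)$, while on $\mathcal G^{c}$ I bound the conditional probability by $1$ and invoke $\P_R(\mathcal G^{c})\le\delta/2$. The two $\delta/2$ contributions then sum to $\delta$, giving $(B,\epsilon,\delta)$-TDP. I expect the only real obstacle to be the bookkeeping of this last step --- in particular, recognizing that the sensitivity-failure probability can be charged additively to $\delta$, and that $\mathcal G$ is genuinely pair-independent so that the clean global-sensitivity lemma applies rather than a weaker per-pair Gaussian-mechanism argument; the algebra matching $\sigma$ is then routine.
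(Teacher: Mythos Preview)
Your proposal is correct and follows essentially the same approach as the paper: split $\delta$ in half, use Proposition~\ref{prop:probablistic_global_sens} with $\beta=\delta/2$ to bound the probabilistic sensitivity by $\alpha$, and then invoke Lemma~\ref{lem:kenthapadi} with the remaining $\delta/2$. Your write-up is in fact considerably more careful than the paper's three-line sketch --- you make explicit the pair-independence of the good event $\mathcal G$ and the law-of-total-probability bookkeeping that the paper leaves implicit.
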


\begin{proof}
    By the Proposition \ref{prop:probablistic_global_sens}, with probability $1-\beta$, $||k^{-1}XR-k^{-1}X'R||_F \le \alpha$. Set $\beta = \frac{\delta}{2}$. Now, take $f(X) = k^{-1}XR$. Then $\Delta_2(f) = ||k^{-1}XR-k^{-1}X'R||_F \le \alpha$ with probability $1-\frac{\delta}{2}$. Plugging in the value of $\alpha$ in Lemma \ref{lem:kenthapadi} with $\delta/2$ yields the claim.
\end{proof}

Next, we show that Step 4 of Algorithm \ref{algo:edprp} satisfies $(B, \epsilon_2,\delta_2)$-TDP.

\begin{lem}
    \label{lem:gauss_mech_tdp} 
    Let $C$ be the covariance matrix of $X \in \L_2^{n \times d}$. Then the mechanism $C + G$ satisfies $(B,\epsilon, \delta)$-TDP where $G_{i,j} \sim_{i.i.d} N(0, 2g_C^2\ln(1.25/\delta)/\epsilon^2)$ for all $i \ge j$, and $G_{i,j} =  G_{j,i}$ for all $i < j$, where  $g_C \le 2B$ is the worst-case Frobenius norm for the difference of covariance matrices that are $B$-neighbors, provided $\epsilon \in (0,1)$.
\end{lem}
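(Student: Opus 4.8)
The plan is to reduce the claim to the classical Gaussian mechanism in two stages: (i) bound the $B$-neighbor sensitivity of the map $f(X) = X^{T}X$ in Frobenius norm, which identifies $g_C$, and (ii) dispatch the symmetry of the noise matrix $G$ via a vectorization-plus-post-processing argument. Throughout I read ``neighboring'' as ``$B$-neighboring,'' exactly as in the proof of Proposition \ref{prop:projection}, so that the standard Gaussian-mechanism statement transfers verbatim to the TDP setting. Here $C = X^{T}X$ is the (uncentered) covariance matrix actually formed in Step 4 of Algorithm \ref{algo:edprp}.

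First I would compute $g_C$. Let $X,X' \in \L_2^{n\times d}$ be $B$-neighbors differing only in row $i$, and write $w = X_i$, $w' = X_i'$ for the two differing rows, so that $||w||_2,||w'||_2 \le 1$ and $||w-w'||_2 \le B$. Since $X^{T}X = \sum_{\ell} X_\ell^{T}X_\ell$ and the databases agree off row $i$, the difference reduces to the rank-at-most-two matrix $X^{T}X - X'^{T}X' = w^{T}w - w'^{T}w'$. The key algebraic step is the identity $w^{T}w - w'^{T}w' = w^{T}(w-w') + (w-w')^{T}w'$; applying the outer-product identity $||a^{T}b||_F = ||a||_2\,||b||_2$ and the triangle inequality then gives
$$
||X^{T}X - X'^{T}X'||_F \le ||w||_2\,||w-w'||_2 + ||w-w'||_2\,||w'||_2 \le 2B.
$$
Taking the supremum over all $B$-neighbors yields $g_C \le 2B$, matching the worst-case value hard-coded into Step 4.

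Next I would handle the symmetric perturbation. Because $G$ is symmetric by construction, the output $C+G$ is completely determined by its lower triangle, and that lower triangle is precisely the vector Gaussian mechanism applied to the free coordinates $(C_{ij})_{i\ge j}$ with i.i.d.\ noise of variance $\sigma^2 = 2g_C^2\ln(1.25/\delta)/\epsilon^2$. The sensitivity in these coordinates is $\sqrt{\sum_{i\ge j}(C-C')_{ij}^2} \le ||C-C'||_F \le g_C$, since discarding the duplicated strict upper triangle only decreases the sum. Hence the noise is calibrated to an upper bound on the true coordinate-wise sensitivity, and the classical Gaussian mechanism (valid for $\epsilon \in (0,1)$) certifies that the lower-triangular output is $(B,\epsilon,\delta)$-TDP. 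Reconstructing the full symmetric matrix $C+G$ from its lower triangle is a deterministic map, so post-processing (Lemma \ref{lem:postprocess}) preserves the guarantee and the claim follows.

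The main obstacle I anticipate is the bookkeeping around symmetry: one must resist calibrating the noise directly against the Frobenius sensitivity of the full matrix (which would effectively double-count the off-diagonal entries) and instead argue on the genuinely independent triangular coordinates, where the Frobenius bound $g_C$ serves only as a convenient and conservative overestimate of the sensitivity. The sensitivity computation itself is routine once the rank-two decomposition is written down; the only care needed there is to invoke $||X_i||_2,||X_i'||_2 \le 1$ (membership in $\L_2^{n\times d}$) for the two outer-product factors and $||X_i - X_i'||_2 \le B$ (the $B$-neighbor condition) for the difference factor.
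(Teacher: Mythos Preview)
Your proposal is correct and follows essentially the same approach as the paper: the sensitivity bound via the decomposition $w^{T}w - w'^{T}w' = w^{T}(w-w') + (w-w')^{T}w'$ is identical, and where the paper dispatches the symmetric-noise step by citing Theorem 2 of Dwork et al.\ (Analyze Gauss), you simply unpack that citation with the lower-triangle vectorization plus post-processing argument.
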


\begin{proof}
    Suppose $X$ and $Z$ are $B$-neighbors. Then there exists a unique row $i$ such that $||X_i - Z_i||_2 \le B$, with all other rows being identical between $X$ and $Z$. For notational simplicity, denote $X_i = x$ and $Z_i = z$. The  sensitivity of the covariance computation is $||X^{T}X - Z^{T}Z||_F = ||x^{T}x - z^{T}z||_F$. Then
    \begin{align*}
        ||X^{T}X - Z^{T}Z||_F & = ||x^{T}x - z^{T}z||_F \\
        & = ||(x^{T}x -x^{T}z) + (x^{T}z - z^{T}z)||_F \\
        & = ||x^T(x-z) + (x-z)^Tz||_F \\
        & \le ||x^T(x-z)||_F + ||(x-z)^Tz||_F \\
        & = ||x||_2 ||x-z||_2 + ||x-z||_2||z||_2 \\
        & \le 2B
    \end{align*} 
    Hence $g_C \le 2B$. Since $\epsilon \in (0,1)$, we can use this value as the sensitivity in the Gaussian perturbation mechanism \cite{dwork2006our, dwork2014algorithmic, balle2018privacy}. The claim now follows by the same line of reasoning as Theorem 2 of Dwork et al. \cite{dwork2014analyze}.
\end{proof}

Taken together, the results of this section all culminate to the following theorem. 

\begin{thm}
\label{thm:alg_satisfies_tdp}
    Algorithm \ref{algo:edprp} satisfies $(B,\epsilon_1 + \epsilon_2,\delta_1+\delta_2)$-TDP.
\end{thm}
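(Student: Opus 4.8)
The plan is to decompose Algorithm~\ref{algo:edprp} into its two privacy-producing stages and its post-processing, and then to assemble the overall guarantee via sequential composition followed by the post-processing lemma. The one genuine wrinkle is that the random matrix $R$ is sampled once in Step~1 and is then reused in both the privatizing projection (Step~2) and the final reconstruction (Step~6); because $R$ is drawn independently of the data, I would handle this by conditioning on $R$ throughout the privacy accounting and averaging over it only at the very end.

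First, fix the matrix $R$ sampled in Step~1. Conditioned on $R$, Steps~2--3 are exactly the mechanism that adds the Gaussian matrix $G$ to $k^{-1}XR$, i.e. the mechanism $M$ analyzed in Proposition~\ref{prop:projection}: the standard deviation prescribed in Step~3 coincides with the one in that proposition upon setting $\epsilon=\epsilon_1$ and $\delta=\delta_1$, and the $k^{-1}$ rescaling is already folded into the probabilistic sensitivity bound of Proposition~\ref{prop:probablistic_global_sens}. Hence $P_{priv}$ satisfies $(B,\epsilon_1,\delta_1)$-TDP. Separately, Step~4 is the symmetric Gaussian covariance mechanism of Lemma~\ref{lem:gauss_mech_tdp}; its noise scale $2B\sqrt{2\ln(1.25/\delta_2)}/\epsilon_2$ matches that lemma with the worst-case sensitivity $g_C=2B$, $\epsilon=\epsilon_2$, and $\delta=\delta_2$ (using $\epsilon_2\in(0,1)$ from the input specification), so $C_{priv}$ satisfies $(B,\epsilon_2,\delta_2)$-TDP. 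Both stages draw fresh, independent noise.

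Next, I would view the joint release $(P_{priv},C_{priv})$ as the sequential composition of these two TDP mechanisms. Applying the sequential composition lemma (Lemma~\ref{lem:seqcom}) with $B_1=B_2=B$ gives that $(P_{priv},C_{priv})$ satisfies $(B,\epsilon_1+\epsilon_2,\delta_1+\delta_2)$-TDP. Steps~5--6 then produce $X_{priv}$ from $(P_{priv},C_{priv})$ by a map that, for fixed $R$, is deterministic: it computes the right singular vectors $V^{T}$ of $C_{priv}$ and returns $P_{priv}(V^{T}R)^{\dagger}V^{T}$. By the post-processing lemma (Lemma~\ref{lem:postprocess}), $X_{priv}$ inherits the $(B,\epsilon_1+\epsilon_2,\delta_1+\delta_2)$-TDP guarantee, still conditionally on $R$.

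Finally, I would remove the conditioning on $R$. Since the bound $\P(A(X)\in E)\le e^{\epsilon_1+\epsilon_2}\P(A(X')\in E)+(\delta_1+\delta_2)$ holds for every fixed $R$ with constants independent of $R$, and since $R$ is sampled independently of the input data, taking the expectation over $R$ of both sides preserves the inequality for the unconditional mechanism, which yields the claim. The step I expect to be the main obstacle is precisely this dual role of $R$: one must resist treating Step~6 as ordinary post-processing of an already-privatized object while forgetting that the very same $R$ entered the privacy analysis in Step~2. Conditioning on $R$ from the outset---so that $R$ is treated as public randomness shared by the privatizing stage and the reconstruction---is what makes both the composition and the post-processing invocations legitimate, and it is the only place where the argument goes beyond a mechanical chaining of the earlier results.
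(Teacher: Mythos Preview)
Your approach is essentially the same as the paper's: invoke Proposition~\ref{prop:projection} for Steps~1--3, Lemma~\ref{lem:gauss_mech_tdp} for Step~4, combine via sequential composition (Lemma~\ref{lem:seqcom}), and finish with post-processing (Lemma~\ref{lem:postprocess}) for Steps~5--6. You are in fact more explicit than the paper about the dual role of $R$, which is a genuine subtlety the paper's terse proof does not spell out.

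One small slip to flag: the claim that the $(B,\epsilon_1,\delta_1)$ bound ``holds for every fixed $R$'' is not literally correct. The noise scale in Step~3 is calibrated to the \emph{probabilistic} sensitivity bound from Proposition~\ref{prop:probablistic_global_sens}, which holds only with probability $1-\delta_1/2$ over the draw of $R$; for a ``bad'' $R$ whose induced sensitivity exceeds $\alpha$, the Gaussian noise is undersized and no guarantee follows. The correct accounting is: conditional on a good $R$, Steps~2--3 satisfy $(B,\epsilon_1,\delta_1/2)$-TDP via Lemma~\ref{lem:kenthapadi}; composing with Step~4 and post-processing gives $(B,\epsilon_1+\epsilon_2,\delta_1/2+\delta_2)$-TDP conditional on good $R$; averaging over $R$ then absorbs the remaining $\delta_1/2$ failure probability into the additive slack, yielding the stated $(B,\epsilon_1+\epsilon_2,\delta_1+\delta_2)$-TDP. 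This is exactly the spirit of your conditioning-then-averaging plan, so the fix is a one-line refinement rather than a change of strategy.
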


\begin{proof}
    By the Proposition \ref{prop:projection}, Steps 1-3 of Algorithm \ref{algo:edprp} satisfy $(B,\epsilon_1,\delta_1)$-TDP. By Lemma \ref{lem:gauss_mech_tdp}, Step 4 satisfies $(B,\epsilon_2,\delta_2)$-TDP. By the post-processing lemma (Lemma \ref{lem:postprocess}), Step 5 does not increase the privacy loss from Step 4. Step 6 combines the computed values, so by the sequential composition lemma (Lemma \ref{lem:seqcom}), our algorithm satisfies $(B,\epsilon_1 + \epsilon_2,\delta_1+\delta_2)$-TDP. 
\end{proof}

\section{Supplementary Note: Additional Information on Privacy Measures}
\label{protection}
\subsection{Relative protection score using the holdout-approach for attribute inference}
\label{aia}


In this section, we provide additional technical details on our quantification of attribute inference protection. We follow the approach presented in Giomi et al. \cite{giomi2022unified}. We consider an adversary who has access to the privatized data $X_{priv}$, the privatization algorithm $A$, and an auxiliary information source: a column-wise submatrix $S$ of $1 \le h < d$ columns of $X$. That is, the adversary has access to $h$ columns of the original data on all $n$ individuals. Using these three objects, the adversary's goal is to infer the missing $d-h$ columns of $X$.

Our measurement of attribute inference protection is adopted from Giomi et al. \cite{giomi2022unified}, which uses a holdout-approach similar to machine learning applications. For completeness and self-contained exposition, we describe the measurement procedure (and provide additional details) below.

For a column $j$ that an adversary seeks to infer in a database $X$, let $p(Z;S)_j$ represent $Z$'s \textit{protection score for column $j$} when $S$ is known. As noted in \textit{Methods} in the main text, this the proportion of individuals whose values the adversary correctly infers in column $j$ (up to a tolerance of 5\% error) using a privatized database $Z$ and auxiliary info $S$. 

Given the original data $X$, randomly select 500 individuals to remove from $X$. Denote $X$ without these 500 individuals as $W$ (working-set) and the dataset with these 500 individuals as $H$ (holdout-set). Run the privatization algorithm $A$ on $W$ to construct $W_{priv}$. Run the attribute inference program procedure on both $W_{priv}$ and $H$ to generate $p(W_{priv};S)_j$ and $p(H;S)_j$.

For a column $j$, the protection score $p(W_{priv};S)_j$ in isolation is difficult to interpret. For instance, if $p(W_{priv};S)_j$ is small, then this could be attributed to multiple culprits. In one case, $A$ preserved population-level information in $W$ well without retaining ``too much information'' about any single individual; since the population level information is preserved, individual values could nonetheless be inferred. On the other hand, $A$ could have retained ``too much information'' about an individual, enabling the inference of their values. In the first example, the success rate misidentifies the ``statistical utility'' of our privatized dataset with inference risk, whereas in the second example the success rate correctly identifies inference risk.

In order to delineate between these two cases, we compare $p(W_{priv};S)_j$ with $p(H;S)_j$. Since $W_{priv}$ was created using $W$ and not $H$, if $p(W_{priv};S)_j$ and $p(H;S)_j$ are close, then this can be ascribed to the algorithm $A$ capturing statistical information from the whole working population $W$ and not from a particular individual. Alternatively, if $p(W_{priv};S)_j << p(H;S)_j$, then it likely has learned ``too much'' about particular individuals. Thus, we define the \textit{relative protection score of column $j$ under $S$} as\footnote{Giomi et al. define this an equivalent quantity in terms of a \textit{risk score}, which is $1-r_{j,S}$.} 
$$
r_{j,S} = \frac{p(W_{priv};S)_j}{p(H;S)_j}
$$

\subsection{Computing the distinguishing protection score for our algorithm}
\label{dist}

Our privacy measure of distinguishing protection is rooted in the \textit{privacy loss random variable}, a  unitless quantity that measures the differences in an algorithm's behavior between classic neighbors $X,X' \in \L_2^{n \times d}$ \cite{dwork2014algorithmic}. The privacy loss random variable quantifies the adversary's advantage during the distinguishing game when a randomized algorithm $A$ outputs $X_{priv}$ (i.e., the ability to discern whether $X$ or $X'$ was more likely to produce $X_{priv}$ under $A$) \cite{desfontainesblog20200306}. 

For Gaussian perturbation mechanisms with variance $\sigma^2$, the privacy loss random variable follows a Gaussian distribution with mean $||f(X) - f(X')||_F^2 / (2\sigma^2)$ \cite{balle2018privacy}. Therefore, the worst-case expected privacy loss is given by $\text{sup}||f(X) - f(X')||_F^2 / (2\sigma^2)$ where the supremum is taken over $X$ and $X'$ classic neighbors.

Algorithm \ref{algo:edprp} is comprised of two Gaussian perturbation mechanisms in Steps 3 and 4. Let $E_3$ and $E_4$ represent the worst-case expected privacy loss induced by Steps 3 and 4. Computing these quantities from Proposition \ref{prop:projection} and Lemma \ref{lem:gauss_mech_tdp} yields
$$
E_3 = \frac{\epsilon_1^2}{4(\ln(1/\delta_1) + \epsilon_1)} \text{ and } E_4 = \frac{\epsilon_2^2}{16\ln(1.25/\delta_2)}
$$

Since $E_3$ and $E_4$ are defined for classic neighbors, they do not contain information about $B$. To enable comparison of our algorithm across all values of $B$, we use the switch lemma to translate all $(B, \epsilon, \delta)$ to $(2, \hat{\epsilon}, \hat{\delta})$, where $s = \ceil{2B^{-1}}$, $\hat{\epsilon} = s\epsilon$, and $\hat{\delta} = \min\{1, \frac{e^{s\epsilon}-1}{e^{\epsilon}-1}\delta\}$. Hence, an upper bound on the worst-case expected privacy loss of our algorithm with parameters $B, \epsilon_1,\delta_1,\epsilon_2,\delta_2$ is given by

$$
\mathcal{U} = \frac{\hat{\epsilon}_1^2}{4(\ln(1/\hat{\delta}_1) + \epsilon_1)} + \frac{\hat{\epsilon}_2^2}{16\ln(1.25/\hat{\delta}_2)} \in [0, \infty) \cup \{\infty\}
$$
To facilitate consistent interpretation with singling-out and attribute inference protection (each of which take values in $[0,1]$), we define the distinguishing protection as $\mathcal{D} = (\mathcal{U} + 1) ^{-1} \in [0,1]$.

\section{Supplementary Figures}
\label{figs_tables}
\begin{table}[hbt!]

\caption{Summary statistics of the datasets used in our empirical analyses.}
\label{table:data}
\centering
\small
\begin{tabular}{@{}lll@{}}
\toprule
\textbf{}                                                        & \textbf{Togo}   & \textbf{Nigeria}     \\ \midrule
\textbf{Number of Subscribers}                                                   & 4,201 & 20,788 \\
\textbf{Number of Features}                                          & 10                             & 15                           \\
\textbf{Targeting Variable}                                          & Consumption                             & Loan Repayment                          \\
\textbf{Targeting Criteria}                                          & $\le 29^{th}$ percentile                            & Low-risk borrower (1)                          \\
\textbf{Range of Targeting Variable}                                           & [0.005, 1]                  & \{0,1\}               \\
\textbf{Mean of Targeting Variable}                                           & 0.073                  & 0.633               \\
\textbf{SD of Targeting Variable}                                           & 0.068                   & 0.482                        \\
\textbf{Skewness of Targeting Variable}                                           & 4.207                   & -0.551                        \\
\bottomrule
\end{tabular}

\end{table}



\end{document}